\newcommand{\bbR}{{\mathbb{R}}}
\newcommand{\bbC}{{\mathbb{C}}}
\newcommand{\no}{\nonumber}
\newcommand{\supp}{\text{\rm{supp}}}
\newcommand{\beq}{\begin{equation}}
\newcommand{\eeq}{\end{equation}}
\newcommand{\ba}{\begin{align}}
\newcommand{\ea}{\end{align}}
\numberwithin{equation}{section}
\newtheorem{theorem}{Theorem}[section]
\newtheorem{lemma}[theorem]{Lemma}
\theoremstyle{definition}
\newtheorem{definition}[theorem]{Definition}
\theoremstyle{remark}
\newtheorem{remark}{Remark}[section]
\date{}
\author{Jonathan Breuer}
\title[Scaling Limits of Jacobi Matrices]{Scaling Limits of Jacobi Matrices and the Christoffel-Darboux Kernel}
\begin{document}
\sloppy
\thanks{Institute of Mathematics, The Hebrew University of Jerusalem, Jerusalem, 91904, Israel.
E-mail: jbreuer@math.huji.ac.il. Supported in part by The Israel Science
Foundation (Grant No. 399/16) and by BSF 2014337}
\maketitle

\begin{abstract}
We study scaling limits of deterministic Jacobi matrices at a fixed point, $x_0$, and their connection to the scaling limits of the Christoffel-Darboux kernel at that point. We show that in the case that the orthogonal polynomials are bounded at $x_0$, a subsequential limit always exists and can be expressed as a canonical system. We further show that under weak conditions on the associated measure, bulk universality of the CD kernel is equivalent to the existence of a limit of a particular explicit form.
\end{abstract}

\section{Introduction}

During the last several years, the subject of scaling limits of random operators has been of considerable utility and interest in the context of random matrix theory \cite{EdelSutt, KRV, RRV, VV1, VV2, ViragRev}. Specifically, in this approach, eigenvalue distribution asymptotics are studied through the identification of a limiting random operator whose eigenvalue distribution corresponds to the limiting distribution of the finite volume eigenvalues. The tridiagonal structure behind the models studied in the references cited above has been central to this approach, which has been applied to study eigenvalue asymptotics of one dimensional Schr\"odinger operators as well \cite{KVV}.

On the deterministic side, the study of asymptotics of eigenvalues of Jacobi matrices is a classical subject going back over seventy years \cite{ET}. This is clear when one recognizes that these eigenvalues are also zeros of orthogonal polynomials \cite{SimonZeros}. Recent renewed interest is due in part to the connection with scaling limits of the Christoffel-Darboux kernel, itself a subject motivated by random matrix theory considerations. Remarkably however, to the best of our knowledge, scaling limits of \emph{deterministic} Jacobi matrices have not been studied. It is our intention in this paper to start filling this gap, and moreover, elucidate the connection of such scaling limits to the scaling limits of the Christoffel-Darboux kernel.

\smallskip

The Christoffel-Darboux (CD) kernel, $K_n^{\mu}(x,y)$, associated with a measure $\mu$ on $\mathbb{R}$, is the integral kernel of the orthogonal projection in $L^2(\mu)$ onto the subspace of polynomials with degree$<n$. Clearly,
\beq \nonumber
K_n^{\mu}(x,y)=\sum_{j=0}^{n-1}p_j^{\mu}(x)p_j^{\mu}(y),
\eeq
where $\left \{p_j^{\mu} \right \}_{j=0}^\infty$ are the orthonormal polynomials associated with $\mu$, i.e., $\textrm{deg}p_j^\mu=j$ and
\beq \nonumber
\int p_j^\mu(x)\overline{p_k^\mu(x)}\textrm{d}\mu(x)=\delta_{j,k}.
\eeq
Note that the $p_j$'s are uniquely determined by this condition, up to the sign of the leading coefficient, which we henceforth take to be positive.

It is a classical result (see, e.g., \cite{deift}) that the polynomials $\{p_j^\mu \}_{j=1}^\infty$ satisfy a recurrence relation, encoded by the \emph{Jacobi matrix} associated with $\mu$
\beq \label{eq:Jacobi}
J^{\mu}=\left(\begin{array}{ccccc} b_1^\mu & a_1^\mu & 0 & \ & \ \\
a_1^{\mu} & b_2^\mu & a_2^\mu & \ddots & \ \\
0 & a_2^\mu & b_3^\mu & a_3^\mu & \ddots \\
\ & \ddots & \ddots & \ddots & \ddots
\end{array} \right)
\eeq
where $a_j^\mu\geq 0$ and $b_j^\mu \in \bbR$. The recurrence relation can be written as a formal `eigenvalue' equation
\beq \label{eq:Recurrence}
J^\mu \left(\begin{array}{c} p_0^\mu (x) \\ p_1^\mu(x) \\ p_2^\mu(x) \\ \vdots \end{array} \right)=x\left(\begin{array}{c} p_0^\mu (x) \\ p_1^\mu(x) \\ p_2^\mu(x) \\ \vdots \end{array} \right)
\eeq
which illustrates the connection between spectral theory and the theory of orthogonal polynomials. Thus, in particular, the measure $\mu$ is the spectral measure of $J^\mu$ and the vector $\delta_1=(1,0,0,\ldots)^{\textrm{t}}$ \cite{deift}. From here on, we shall omit the superscript $\mu$ when there is no risk of confusion.

As the kernel of a projection operator, $K_n$ is a \emph{reproducing kernel} in the sense that
\beq \label{eq:CDReproduce}
K_n(x,y)=\int K_n(x,z)K_n(z,y) \textrm{d}\mu(z).
\eeq
Remarkably, the CD kernel plays an important role in various settings. In the context of random matrix theory, $K_n(x,y)$ plays the role of the correlation kernel for the \emph{orthogonal polynomial ensemble} with measure $\mu$ \cite{Konig}. This is the measure on $\mathbb{R}^n$ given by \beq \no
\textrm{d}\mathbb{P}_n(\lambda_1,\ldots,\lambda_n)=\frac{1}{Z_n}\prod_{1\leq i<j \leq n}|\lambda_i-\lambda_j|^2\textrm{d}\mu(\lambda_1)\ldots\textrm{d}\mu(\lambda_n)
\eeq
where $Z_n$ is a normalization constant. Examples of such ensembles arise naturally in random matrix theory, probability and combinatorics (for a review see \cite{Konig}).

The asymptotic properties (as $n \rightarrow \infty$) of
\beq \label{eq:CDScaling}
\frac{K_n \left(x_0+\frac{a}{n},x_0+\frac{b}{n} \right)}{n}
\eeq
determine the asymptotic properties of $\mathbb{P}_n$ on a scale of size $\frac{1}{n}$ around $x_0$. In this context, the phenomenon of microscopic universality is especially important: in a large number of situations the form of the limit of \eqref{eq:CDScaling} does not depend on the (local or global) properties of $\mu$. In particular, for many classes of measures it has been shown that when $x_0$ is a Lebesgue point of $\mu$ then
\beq \label{eq:BulkUniversality}
\lim_{n \rightarrow \infty} \frac{K_n \left(x_0+\frac{a}{n},x_0+\frac{b}{n} \right)}{n} =\frac{\sin(\pi \rho(x_0)(b-a))}{\pi w(x_0)(b-a)}
\end{equation}
where $\rho$ is the density of the limiting mean empirical measure, (that is, the Radon-Nikodym derivative, w.r.t.\ Lebesgue measure, of the weak limit of $\frac{K_n(x,x)}{n}\textrm{d}\mu(x)$). A very partial list of relevant references is \cite{als,deift,findley,KV,lubinsky2,lubinsky1,lubinskyRev,lubinskyRev2,Simon-ext,totik,totik, XZZ}. 
\begin{remark}
Our scaling \eqref{eq:CDScaling} may seem slightly unusual. Indeed, in many of the works dealing with bulk universality, the limit is considered for 
\beq \no
\frac{K_n \left(x_0+\frac{a}{w(x_0)K_n(x_0,x_0)},x_0+\frac{b}{w(x_0)K_n(x_0,x_0)} \right)}{K_n(x_0,x_0)}
\eeq
which converges (under the appropriate conditions) to $\frac{\sin(\pi(b-a))}{\pi(b-a)}$. However, under the assumptions that $x_0$ is a \emph{strong Lebesgue point} (see Definition \ref{def:SLP} below), which holds for almost every $x_0$ w.r.t.\ the absolutely continuous part of $\mu$, \eqref{eq:BulkUniversality} is equivalent to this more standard formulation of universality together with $\frac{K_n(x_0,x_0)w(x_0)}{n} \rightarrow \rho(x_0)$ (see \cite[Remark 1.2]{bls}). Since \eqref{eq:BulkUniversality} is more convenient for our purposes, we use this formulation.
\end{remark}

Another, more classical, setting in which the asymptotics of \eqref{eq:CDScaling} play an important role is the study of the zeros of the orthonormal polynomial $p_n$. It follows from the Christoffel-Darboux formula (see \eqref{eq:CDFormula} below) that if $p_n(x)=0$ then $K_n(x,y)=0$ iff $y \neq x$ and $p_n(y)=0$. Thus the possible limits of \eqref{eq:CDScaling} also describe the asymptotic behavior of these zeros on a scale of $\frac{1}{n}$ around the point $x_0$. More precisely, if we enumerate the zeros according to their position relative to $x_0$:
\beq \no
\ldots<x_{-2}^{(n)}<x_{-1}^{(n)}<x_0\leq x_0^{(n)}<x_1^{(n)}<\ldots
\eeq
then the asymptotics of $\widetilde{x}_j^{(n)}(x_0)=n(x_{j}^{(n)}-x_0)$ are determined by the asymptotics of \eqref{eq:CDScaling}. In particular, since the zeros of $\sin(\pi x)/(\pi x)$ are the non-zero integers, \eqref{eq:BulkUniversality} implies
\beq \no
\lim_{n \rightarrow \infty} \widetilde{x}_j^{(n)}-\widetilde{x}_{j-1}^{(n)}=\frac{1}{\rho(x_0)}
\eeq
for any fixed $j$ \cite{Freud,LeLu,Simon-CD}.

We note that the $\left \{x_j^{(n)} \right \}$ also have a spectral interpretation: if we let $J^{(n)}$ denote the $n\times n$ truncation of $J$
\beq \label{eq:Jn}
J^{(n)}=\left(\begin{array}{ccccc} b_1 & a_1 & 0 & \ & \ \\
a_1 & b_2 & a_2 & \ddots & \ \\
0 & a_2 & b_3 & a_3 & \ddots \\
\ & \ddots & \ddots & \ddots & \ddots \\
0 & \ldots & \ldots & a_{n-1} & b_n
\end{array} \right)
\eeq
it follows from \eqref{eq:Jacobi} and $p_n \left(x_j^{(n)} \right)=0$ that $\left \{x_j^{(n)} \right \}_j$ are precisely the eigenvalues of $J^{(n)}$.

It is now natural to ask whether there exists a sequence of operators, say $ \left \{\widetilde{J}^{(n)}(x_0)  \right \}_n$ such that the spectrum of $\widetilde{J}^{(n)}(x_0)$ is the set $\left \{\widetilde{x}_j^{(n)}(x_0) \right \}_j$, and that has a limit/limits in an appropriate sense, determining the possible limits of the points $ \left \{\widetilde{x}_j^{(n)}(x_0) \right \}_j$. While such scaled sequences have been identified for various \emph{random} Jacobi matrix ensembles \cite{KRV, KVV, RRV, VV1, VV2, ViragRev}, to the best of our knowledge this problem has not been studied in the deterministic setting. The main point of this paper is the study of the connection between such scaling limits in the bulk (i.e., in the interior of the support of $\mu$) and the asymptotics of \eqref{eq:CDScaling}.

A naive candidate for $\widetilde{J}^{(n)}(x_0)$ would be $n \left(J^{(n)}-x_0I \right)$. However, for fixed $J$, there is no chance that this sequence can have a limit. As we shall show, the right framework for the definition of $\widetilde{J}^{(n)}(x_0)$ is that of the transfer matrices associated with $J$ at $x_0$. We shall construct a sequence of difference operators that is equivalent in a certain sense to the sequence $\{J_n\}_{n}$, and which converges under certain conditions to a limiting differential operator.

While it is not an essential feature in the general approach, additional insight comes from the fact that these operators can be cast in the form of \emph{canonical systems} \cite{debranges,Remling,Romanov}. It is well known that Jacobi matrices may be represented as particular cases of canonical systems and this representation has been shown to be useful in the context of inverse spectral theory (\cite{Remling} treats the analogous continuum case). Moreover, scaling limits of CD kernels have been identified under certain conditions as reproducing kernels of de Branges spaces \cite{LubJFA}. Such spaces are associated with canonical systems \cite{debranges,Remling,Romanov} through the fact that the underlying reproducing kernel can be obtained from the solutions to the corresponding eigenvalue equation. In the random setting, this approach has recently led to a random operator whose eigenvalue distribution corresponds to the asymptotic zero distribution on the critical line of the zeta function \cite{VV2}.

Thus our strategy will be to identify continuous canonical systems as possible limits of the discrete canonical system associated with $J_n$ and $x_0$. Our first main result is a compactness result, asserting that at any point, $x_0$, where the polynomials of the first and second kind (see \eqref{eq:SecondKind} below) are bounded, there always exists a (subsequential) limit. Our second main result establishes an equivalence between \eqref{eq:BulkUniversality} and the characterization of this limit.

This paper is structured as follows. We state our main results and describe the strategy of proof in Section 2. The basic technical convergence results that are presented in Section 3 are used in Section 4 to prove Theorems \ref{thm:Compactness} and \ref{thm:Sine}. Section 5 describes an example illustrating the power of this approach to treat perturbations of strength $O(1/n)$. In the appendix we briefly review some relevant aspects of the theory of canonical systems.


\section{Statement of Main Results}

Fix a measure, $\mu$, with compact support in $\mathbb{R}$, and write
\beq \no
\textrm{d}\mu(t)=w(t)\textrm{d}t+\textrm{d}\mu_{\textrm{sing}}(t),
\eeq
where $\textrm{d}\mu_{\textrm{sing}}$ is the part of $\mu$ that is singular with respect to Lebesgue measure. Let $K_n$ be the associated CD kernel, let $J_n$ be the associated truncated Jacobi matrix as defined in \eqref{eq:Jn} and let $x_0 \in \supp(\mu)$. The key to our approach lies in associating the pair $\left(J_n,x_0 \right)$ with a discrete canonical system that has $K_n$ as its associated reproducing kernel. It is the scaling limit of this canonical system that we shall identify as the scaling limit of $J_n$ at $x_0$. In order to describe this canonical system we first need to describe the transfer matrices associated with $J$ at the point $x_0$.

The $n$-step transfer matrix, $T_n(x)$, at $x$ is the matrix
\beq \label{eq:transfer}
T_n(x)=\left(\begin{array}{cc}  p_n(x) & - q_n(x) \\
a_n p_{n-1}(x) & -a_n q_{n-1}(x) \end{array} \right).
\eeq
Here,
\beq \label{eq:SecondKind}
q_\ell (x)=\int \frac{p_\ell(x)-p_\ell(y)}{x-y}\textrm{d}\mu(y)
\eeq
is the second kind polynomial with respect to the measure $\mu$ (see, e.g., \cite[(1.9)]{bls}). The recurrence relation \eqref{eq:Recurrence} implies that
\beq \label{eq:TransferProduct}
T_n(x)=S_n(x)S_{n-1}(x)S_{n-2}(x)\cdots S_1(x)
\eeq
where the one-step transfer matrix at step $\ell$ is defined by
\beq \label{eq:OneStepTransfer}
S_\ell(x)=\left(\begin{array}{cc} \frac{x-b_\ell}{a_{\ell}} & -\frac{1}{a_{\ell}}  \\
a_{\ell} & 0 \end{array} \right)
\eeq
and we take $a_0 \equiv 1$. Note that
\beq \label{eq:det}
\det S_\ell(x)=1
\eeq
for all $\ell,x$, which implies
\beq \label{eq:det1}
\det T_\ell(x)=1
\eeq
as well.

Now fix $x_0$ and for any $x \in \mathbb{C}$ let
\beq \no
Q_\ell(x)=T_\ell(x_0)^{-1}T_\ell(x).
\eeq
The sequence $\left\{Q_\ell(x) \right\}_\ell$ satisfies
\beq \label{eq:VariationOfParameters}
\begin{split}
Q_{\ell+1}(x)&=T_{\ell}(x_0)^{-1}S_{\ell+1}^{-1}\left(x_0 \right)S_{\ell+1}(x)T_{\ell}(x_0)Q_\ell(x)\\
&=T_{\ell}(x_0)^{-1} \left(\begin{array}{cc} 1 & 0 \\ x_0-x & 1 \end{array} \right) T_{\ell}(x_0)Q_\ell(x) \\
&=\left(\textrm{Id}+(x-x_0)\left( \begin{array}{cc} -p_\ell(x_0)q_\ell(x_0) & q_\ell(x_0)^2 \\ -p_\ell(x_0)^2 & p_\ell(x_0)q_\ell(x_0) \end{array} \right) \right) Q_\ell(x)
\end{split}
\eeq
by \eqref{eq:transfer} and \eqref{eq:det1}. In other words
\beq \label{eq:Difference}
Q_{\ell+1}(x)-Q_\ell(x)=(x-x_0)\left( \begin{array}{cc} -p_\ell(x_0)q_\ell(x_0) & q_\ell(x_0)^2 \\ -p_\ell(x_0)^2 & p_\ell(x_0)q_\ell(x_0) \end{array} \right) Q_\ell(x)
\eeq
or, multiplying both sides by the matrix $\mathcal{J}=\left( \begin{array}{cc} 0 & -1 \\
1 & 0 \end{array} \right)$,
\beq \label{eq:DiscreteCanonical}
\mathcal{J}\left(Q_{\ell+1}(x)-Q_\ell(x)\right) =(x-x_0)H_{x_0,\ell} Q_\ell(x)
\eeq
where
\beq \label{eq:H}
H_{x_0,\ell}=\left(\begin{array}{cc} p_\ell(x_0)^2   & -p_\ell(x_0)q_\ell(x_0)  \\ -p_\ell(x_0)q_\ell(x_0)  & q_\ell(x_0)^2 \end{array} \right)
\eeq
is a nonnegative definite matrix. Equation \eqref{eq:DiscreteCanonical} is a discrete \emph{canonical system} and $Q_\ell(x)$ is a matrix valued solution. We discuss canonical systems in more detail in the appendix.

To understand the relevance of \eqref{eq:DiscreteCanonical} to $K_n(x,y)$ we recall the Christoffel-Darboux formula \cite[Theorem 3.10.4]{SimonSz} which says that for $x\neq y$
\beq \label{eq:CDFormula}
\begin{split}
K_n(x,y)&=\frac{a_n\left(p_n(x)p_{n-1}(y)-p_n(y)p_{n-1}(x) \right)}{x-y} \\
&=\frac{1}{x-y}\det\left(\begin{array}{cc} p_n(x) &  p_n(y) \\ a_n p_{n-1}(x) & a_n p_{n-1}(y) \end{array} \right)
\end{split}
\eeq
or, by \eqref{eq:transfer},
\beq \no
K_n(x,y)=\frac{1}{x-y}\det\left(T_n(x)\left(\begin{array}{c} 1 \\ 0 \end{array} \right),T_n(y)\left(\begin{array}{c} 1 \\ 0 \end{array} \right)  \right).
\eeq

Since the determinant does not change by multiplying both columns by the same matrix of determinant 1, we may multiply both columns by $T_n(x_0)^{-1}$ to get
\beq \label{eq:CDDetFormula}
\begin{split}
K_n(x,y) =\frac{1}{x-y}\det\left(Q_n(x)\left(\begin{array}{c} 1 \\ 0 \end{array} \right),Q_n(y)\left(\begin{array}{c} 1 \\ 0 \end{array} \right)  \right).
\end{split}
\eeq

Now, by taking $x=x_0+\frac{a}{n}$, $y=x_0+\frac{b}{n}$ we see that
\beq \label{eq:ScaledCDKernel}
\begin{split}
&\frac{K_n\left(x_0+\frac{a}{n},x_0+\frac{b}{n} \right)}{n} \\
&\quad =\frac{1}{a-b}\det\left(Q_n(x_0+a/n)\left(\begin{array}{c} 1 \\ 0 \end{array} \right),Q_n(x_0+b/n)\left(\begin{array}{c} 1 \\ 0 \end{array} \right)\right)
\end{split}
\eeq
so that the asymptotics of the right hand side are determined by asymptotics of the solution to the difference equation
\beq \label{eq:ScaledDifference}
\mathcal{J}\left(Q_{\ell+1}-Q_\ell \right) =\frac{a}{n}H_{x_0,\ell} Q_\ell
\eeq
with $Q_0=\textrm{Id}$.
Our results describe conditions under which solutions to \eqref{eq:ScaledDifference} converge to solutions to a corresponding continuum canonical system
\beq \label{eq:ContCanon}
\mathcal{J}Q'(t)=aH(t)Q(t)
\eeq
where $H(t)$ is an appropriately defined limit of the sequence $\{H_{x_0,\ell}\}_{\ell=1}^n$.

\begin{remark}
The process through which \eqref{eq:Difference} is obtained is actually quite standard and shows that any Jacobi matrix may be associated with a discrete canonical system (see \cite{Remling} for the analogue in the continuum and \cite{Romanov} for a different association). In fact, the matrix $Q_\ell(x)$ and the difference equation it satisfies is a compact way of writing the variation of parameters method to express the eigenfunctions of $J$ at $x$ as varying linear combinations of the eigenfunctions at $x_0$ (see e.g.\ \cite{BL} where this matricial form of variation of parameters has been applied in a different context).

An equation similar to \eqref{eq:ScaledDifference} for the free Jacobi matrix (where also a random potential perturbation is taken into account) is also the starting point of the analysis in \cite{KVV}. There convergence is shown to a stochastic differential equation. See also \cite{Last-Simon} for an application of similar ideas to study the spacing of zeros of orthogonal polynomials.
\end{remark}

We are now ready to state our main theorems. Below, $Q_\ell^{(x_0)}(a/n)$ denotes a solution to \eqref{eq:ScaledDifference}.

\begin{theorem} \label{thm:Compactness}
Let $\mu$ be a compactly supported measure on $\mathbb{R}$ and fix $x_0\in\mathbb{R}$. Assume that
\beq \label{eq:BoundedPolys}
\sup_{\ell}(p_\ell(x_0)^2+q_\ell(x_0)^2) <\infty.
\eeq
Then there exists a sequence $\{n_k\}_{k=1}^\infty$ and a measurable function
\beq \no
H:[0,1] \rightarrow M_2(\mathbb{C})
\eeq
such that $H\in L^\infty$, $H(t) \geq 0$ for almost every $t\in[0,1]$ and for any $a \in \mathbb{C}$
$$
Q_{[tn_k]}^{(x_0)}\left(a/n_k\right) \underset{k \rightarrow\infty}{\longrightarrow} \widetilde{Q}_a(t)
$$
uniformly in $t\in [0,1]$, where $\widetilde{Q}_a$ is a solution to the matrix canonical system
\beq \label{eq:MatrixCanonical}
\left( \begin{array}{cc} 0 & -1 \\
1 & 0 \end{array} \right) \widetilde{Q}'(t)=aH(t)\widetilde{Q}(t)  \quad \widetilde{Q}(0)=Id.
\eeq
Moreover, for any $a,b \in \mathbb{C}$,
\beq \no
\begin{split}
&\lim_{k \rightarrow \infty}\frac{1}{n_k}K_{n_j}\left(x_0+\frac{a}{n_k}, x_0+\frac{b}{n_k} \right) \\
& \quad =\frac{1}{a-b} \det \left( \widetilde{Q}_a(1) \left(\begin{array}{c} 1 \\ 0 \end{array} \right), \widetilde{Q}_b(1) \left( \begin{array}{c} 1 \\ 0 \end{array} \right)\right)=K_H(a,b),
\end{split}
\eeq
where $K_H(\overline{a},b)$ is the reproducing kernel for the de Branges space associated with the canonical system \eqref{eq:MatrixCanonical}.
\end{theorem}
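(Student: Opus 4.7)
The strategy is to realize $\{H_{x_0,\ell}\}$ as step functions on $[0,1]$ and exploit weak-$*$ compactness together with uniform norm bounds on the discrete solutions. Define
\beq \no
H_n(t)=H_{x_0,\lfloor tn\rfloor},\qquad t\in[0,1).
\eeq
Hypothesis \eqref{eq:BoundedPolys} gives $\mathrm{tr}\,H_{x_0,\ell}=p_\ell(x_0)^2+q_\ell(x_0)^2\leq M$ uniformly, and since $H_{x_0,\ell}\geq 0$ its operator norm is controlled by its trace; hence $\{H_n\}$ is uniformly bounded in $L^\infty([0,1],M_2(\bbC))$. Banach--Alaoglu yields a subsequence (still denoted $n_k$) with $H_{n_k}\rightharpoonup^* H$ for some $H\in L^\infty$, and testing pointwise against vectors $v$ via positive $L^1$ scalar weights shows $H(t)\geq 0$ almost everywhere. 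Meanwhile, iterating the one-step bound $\|Q_{\ell+1}\|\leq \|Q_\ell\|(1+|a|\|H_{x_0,\ell}\|/n)$ coming from \eqref{eq:ScaledDifference} yields $\|Q^{(x_0)}_\ell(a/n)\|\leq e^{|a|M}$ uniformly in $\ell\leq n$ and locally uniformly in $a$.

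Next I would establish the quasi-equicontinuity estimate
\beq \no
\bigl\|Q^{(x_0)}_{\lfloor tn\rfloor}(a/n)-Q^{(x_0)}_{\lfloor sn\rfloor}(a/n)\bigr\|\leq |a|Me^{|a|M}\bigl(|t-s|+1/n\bigr)
\eeq
by summing one-step differences. Passing to piecewise linear interpolants and applying Arzel\`a--Ascoli along a countable dense set of $a$-values via a diagonal argument produces a further subsequence (again denoted $n_k$) along which $Q^{(x_0)}_{\lfloor tn_k\rfloor}(a/n_k)\to\wti Q_a(t)$ uniformly in $t\in[0,1]$ for every $a$ in that dense set. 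Since $a\mapsto Q^{(x_0)}_\ell(a/n)$ is entire and the family is locally uniformly bounded on $\bbC$, Montel's theorem upgrades this to convergence for every $a\in\bbC$, still uniform in $t$, and the limit $\wti Q_a(t)$ is entire in $a$.

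The main obstacle is identifying $\wti Q_a$ as a solution of \eqref{eq:MatrixCanonical}. Rewriting \eqref{eq:ScaledDifference} in Volterra form,
\beq \no
Q^{(x_0)}_{\lfloor tn\rfloor}(a/n)=\mathrm{Id}+a\int_0^{\lfloor tn\rfloor/n}\mathcal{J}^{-1}H_n(s)\,Q^{(x_0)}_{\lfloor sn\rfloor}(a/n)\,ds,
\eeq
one must pass to the limit in a product of a weak-$*$ convergent factor and a uniformly convergent factor. Writing
\beq \no
H_{n_k}Q^{(x_0)}_{\lfloor\cdot n_k\rfloor}(a/n_k)=H_{n_k}\bigl(Q^{(x_0)}_{\lfloor\cdot n_k\rfloor}(a/n_k)-\wti Q_a\bigr)+H_{n_k}\wti Q_a,
\eeq
the first piece integrates to zero uniformly in $t$ as it is the product of an $L^\infty$-bounded sequence with one converging to zero in $L^\infty$, while for each fixed $t$ the second integral converges by applying the weak-$*$ convergence to the fixed $L^1$ test function $\chi_{[0,t]}\wti Q_a$. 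This delivers the integral equation $\wti Q_a(t)=\mathrm{Id}+a\int_0^t\mathcal{J}^{-1}H(s)\wti Q_a(s)\,ds$, which is equivalent to \eqref{eq:MatrixCanonical} with $\wti Q_a(0)=\mathrm{Id}$. I expect the corresponding convergence lemmas to be precisely the content of Section 3.

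Finally, setting $t=1$ in the established uniform convergence and substituting into \eqref{eq:ScaledCDKernel} immediately produces
\beq \no
\tfrac{1}{n_k}K_{n_k}\bigl(x_0+\tfrac{a}{n_k},x_0+\tfrac{b}{n_k}\bigr)\longrightarrow \tfrac{1}{a-b}\det\bigl(\wti Q_a(1)e_1,\wti Q_b(1)e_1\bigr),
\eeq
and the identification of the right-hand side with the reproducing kernel $K_H(\ol a,b)$ of the de Branges space associated to \eqref{eq:MatrixCanonical} is a standard consequence of the theory of canonical systems recalled in the appendix.
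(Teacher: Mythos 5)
Your proof is correct but takes a genuinely different route from the paper. The paper extracts a weakly $L^2$-convergent subsequence of the coefficient matrices $H_{x_0,\ell}$ (Lemma~\ref{lem:CompactnessCondition}), proves uniform-in-$t$ convergence of the running integrals $\int_0^t(\widetilde{A}^{(n_k)}-A)$ by a Cauchy--Schwarz/contradiction argument, and then invokes the separate stability result Lemma~\ref{lem:ContinuumConvergence}, which converts integral convergence of the coefficients into uniform convergence of the solutions via the Picard-type estimate $\|X-\Phi\|_\infty\le e^K M(\Phi)$ together with a piecewise-constant approximant of the solution on $1/L$-subintervals. You instead use weak-$*$ $L^\infty$ compactness for $H$, and, independently, extract a uniformly convergent subsequence of the solutions $Q^{(x_0)}_{[tn_k]}(a/n_k)$ by Arzel\`a--Ascoli (enabled by your equicontinuity bound), then identify the limit by passing to the limit in the Volterra integral equation, exploiting weak-$*$ convergence against the fixed $L^1$ test function $\chi_{[0,t]}\widetilde{Q}_a$. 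Your route is more elementary and bypasses Lemma~\ref{lem:ContinuumConvergence} entirely; what the paper's modular setup buys is reuse: Lemma~\ref{lem:ContinuumConvergence} is also the engine behind Theorem~\ref{thm:Sine} (via Lemma~\ref{lem:NiceConvergence}), where the uniform bound $\sup_j\|A^{(n)}_j\|\le C$ is \emph{not} assumed (only the averaged conditions \eqref{eq:AverageBound}--\eqref{eq:AverageDecay}), so your Arzel\`a--Ascoli equicontinuity estimate would not be available there. Two small omissions in your write-up: you should handle the diagonal $a=b$ in the final display separately (the paper does this with Cauchy's integral formula, using entireness of both sides in $b$), and your argument for $H\ge 0$ a.e.\ needs a countable dense set of test vectors $v$ with a union over the exceptional null sets, whereas the paper reaches the same conclusion more cleanly via Banach--Saks and a.e.\ convergence of Ces\`aro means.
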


\begin{remark}
The condition \eqref{eq:BoundedPolys} says that all solutions of the eigenvalue equation (with eigenvalue $x_0$) of the Jacobi matrix are bounded. It is associated with continuity properties of the measure $\mu$. In particular, it is known that the restriction of $\mu$ to the set of $x$ where \eqref{eq:BoundedPolys} is satisfied, is purely absolutely continuous with respect to Lebesgue measure \cite{simonBounded}. The reverse claim (i.e.\ that for almost all $x$ w.r.t.\ the absolutely continuous part of $\mu$, \eqref{eq:BoundedPolys} holds), known as the Schr\"odinger conjecture, has been disproved by Avila \cite{avila}.
\end{remark}

\begin{remark}
It is interesting to compare this result to \cite[Theorem 1.3]{LubJFA} which says that scaled limits of CD kernels at $x_0$ where $\mu$ is absolutely continuous with a bounded density are reproducing kernels of de Branges spaces that are equal to classical Paley-Wiener spaces. We first point out that the relation of our condition \eqref{eq:BoundedPolys} at $x_0$ to the condition of absolute continuity with a bounded derivative at $\mu$ is not clear at all. While it seems that our condition is somewhat weaker (since it involves only the polynomials \emph{at} $x_0$), it is not known whether this boundedness is indeed implied by sufficiently `nice' behavior of $\mu$ on a neighborhood.

Second, we would like to point out that our conclusion also involves a result for the limits of
\beq \no
\frac{K_{[tn]}\left(x_0+\frac{a}{n},x_0+\frac{b}{n} \right)}{n}
\eeq
for any $t\in [0,1]$ without any extra effort, as these are also determined naturally via are approach.

It would of course be interesting to characterize the possible limiting de Branges spaces/canonical systems in a more concrete way. A natural question is whether every canonical system is a microscopic limit of a (sequence of Jacobi matrix).
\end{remark}

For the next theorem we first need a
\begin{definition} \label{def:SLP}
Let $\textrm{d}\mu(t)=w(t)\textrm{d}t+\textrm{d}\mu_{\textrm{sing}}(t)$ be a measure on $\mathbb{R}$ and let
\beq \label{eq:Stieltjes}
F_\mu(z)=\int \frac{\textrm{d}\mu(t)}{t-z} \quad z \in \mathbb{C}^+
\eeq
be the associated Stieltjes transform. We say that $x$ is a \emph{strong Lebesgue point} of $\mu$ if the following three conditions are satisfied:
\begin{itemize}
\item $F(x+i0)=\lim_{\varepsilon \rightarrow 0+}F(x+i\varepsilon)$ exists and is finite.
\item $\lim_{\varepsilon \rightarrow 0+}\frac{\mu \left(x-\varepsilon, x+\varepsilon \right)}{2\varepsilon}=w(x)>0$.
\item $x$ is a Lebesgue point of $w$.
\end{itemize}
\end{definition}

Note that almost all $x$ with respect to the absolutely continuous part of $\mu$ are strong Lebesgue points.

\begin{theorem} \label{thm:Sine}
Let $x_0$ be a strong Lebesgue point of $\mu$ where we write
\beq \no
\textrm{d}{\mu}(t)={w}(t)\textrm{d}t+\textrm{d}{\mu}_{\textrm{sing}}(t).
\eeq
Let
\beq \no
\textrm{d}\widetilde{\mu}(t)=\widetilde{w}(t)\textrm{d}t+\textrm{d}\widetilde{\mu}_{\textrm{sing}}(t)
\eeq
be the orthogonality measure for the second kind polynomials $\{q_\ell \}_{\ell=0}^\infty$.
Finally, let
\beq \no
H\equiv \left( \begin{array}{cc} \frac{\rho(x_0)}{w(x_0)} & -\textrm{Re}F(x_0+i0)\frac{\rho(x_0)}{w(x_0)} \\
-\textrm{Re}F(x_0+i0)\frac{\rho(x_0)}{w(x_0)} & \frac{\rho(x_0)}{\widetilde{w}(x_0)} \end{array} \right).
\eeq

Then the following are equivalent:

\begin{itemize}
\item[(i)]
$$
\lim_{n \rightarrow \infty} \frac{K_n\left(x_0+\frac{a}{n},x_0+\frac{b}{n} \right)}{n}=\frac{\sin \left(\pi \rho(x_0)(b-a) \right)}{\pi w(x_0) (b-a)}
$$
uniformly for $a,b$ in compact subsets of $\mathbb{C}$, for some $\rho(x_0)>0$.

\item[(ii)] For any $a \in \mathbb{C}$ and $t \in [0,1]$
$$
\lim_{n\rightarrow \infty} Q_{[tn]}^{(x_0)}(a/n)=\widetilde{Q}_a(t)
$$
where $\widetilde{Q}_a(t)$ solves
$$
\left( \begin{array}{cc} 0 & -1 \\
1 & 0 \end{array} \right) \widetilde{Q}'(t)=aH\widetilde{Q}(t)  \quad \widetilde{Q}(0)=Id,
$$
and the convergence is uniform for $a$ in compacts and $t \in [0,1]$.

\end{itemize}
\end{theorem}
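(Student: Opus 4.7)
The plan is to prove (ii) $\Rightarrow$ (i) by explicit computation and (i) $\Rightarrow$ (ii) by compactness (via Theorem \ref{thm:Compactness}) plus uniqueness of canonical system Hamiltonians from their reproducing kernels.

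For (ii) $\Rightarrow$ (i): since the Hamiltonian $H$ is constant in $t$, the solution is the matrix exponential $\widetilde{Q}_a(t) = \exp(at\,\mathcal{J}^{-1}H)$. The key algebraic input is that at a strong Lebesgue point one has $\det H = \pi^2\rho(x_0)^2$; this follows from $\textrm{Im}\,F(x_0+i0) = \pi w(x_0)$ combined with the classical identity $\widetilde{w}(x_0) = w(x_0)/|F(x_0+i0)|^2$ relating the weight of the orthogonality measure of the second-kind polynomials to $F$. Since $\mathcal{J}^{-1}H$ is trace-free, this gives $(\mathcal{J}^{-1}H)^2 = -\pi^2\rho(x_0)^2\,Id$ and the closed form
\[
\widetilde{Q}_a(1) = \cos(\pi\rho(x_0)a)\,Id + \frac{\sin(\pi\rho(x_0)a)}{\pi\rho(x_0)}\,\mathcal{J}^{-1}H.
\]
Plugging this into formula \eqref{eq:ScaledCDKernel} and expanding the $2\times 2$ determinant of the first columns of $\widetilde{Q}_a(1)$ and $\widetilde{Q}_b(1)$ reduces via the sine subtraction identity to the right-hand side of (i), with uniformity on compacts inherited from (ii).

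For (i) $\Rightarrow$ (ii), along any subsequence $\{n_k\}$ I would use Theorem \ref{thm:Compactness} to extract a further subsequence along which $Q^{(x_0)}_{[tn_k]}(a/n_k)$ converges uniformly to a solution $\widetilde{Q}_a(t)$ of a canonical system with some limiting measurable nonnegative Hamiltonian $H(\cdot)$; the remaining task is to show that every such $H(\cdot)$ must equal the constant Hamiltonian of the theorem. Hypothesis (i), rescaled via $m=[tn_k]$ and $\widetilde{a}=at$, implies
\[
\frac{K_{[tn_k]}(x_0+a/n_k,\,x_0+b/n_k)}{n_k} \longrightarrow \frac{\sin(\pi\rho(x_0)\,t\,(b-a))}{\pi w(x_0)(b-a)} \qquad \text{for every } t \in (0,1],
\]
which combined with \eqref{eq:CDDetFormula} identifies the reproducing kernel of the de Branges space associated with $H(\cdot)$ restricted to $[0,t]$ as a Paley--Wiener sine kernel at scale $t\rho(x_0)$ for every such $t$. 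The de Branges uniqueness theorem (summarized in the appendix), which determines a canonical system Hamiltonian from its chain of reproducing kernels up to reparametrization of $t$, then forces $H(\cdot)$ to be the constant Hamiltonian of the theorem; the three independent entries are pinned down by matching low-order behavior with the explicit computation from the first direction. As every subsequential limit is this same constant $H$, the full sequence converges.

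I expect the main obstacle to lie in the second direction, at two specific points. First, the hypothesis \eqref{eq:BoundedPolys} of Theorem \ref{thm:Compactness} is not an obvious consequence of (i) combined with the strong Lebesgue point assumption, so one may need an averaged substitute (for example, boundedness of $\frac{1}{n}K_n(x_0,x_0)$ together with its second-kind analogue, both of which follow from (i) at $a=b=0$) or a direct argument bypassing pointwise boundedness of the polynomials. Second, the off-diagonal entry $-(\textrm{Re}\,F(x_0+i0))\rho/w$ of $H$ is not encoded in the CD kernel of $\mu$ alone; identifying it requires genuinely using the finite boundary value $F(x_0+i0)$ provided by the strong Lebesgue point hypothesis, together with control of the mixed sums $\frac{1}{n}\sum_{\ell<n} p_\ell(x_0)q_\ell(x_0)$ via the second-kind analog of the CD kernel. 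Both steps lie within the standard de Branges framework, but their translation into the discrete Jacobi matrix setting is where most of the technical work will be.
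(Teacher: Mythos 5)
Your direction (ii)~$\Rightarrow$~(i) matches the paper's argument essentially exactly: compute $\widetilde{Q}_a(1)=\exp(a\mathcal{J}^{-1}H)$, use $\det H=\pi^2\rho(x_0)^2$ (which comes from $\widetilde{w}(x_0)=w(x_0)/|F(x_0+i0)|^2$ and $\operatorname{Im}F(x_0+i0)=\pi w(x_0)$ at a strong Lebesgue point), expand the determinant, and finish with Cauchy's formula for $a=b$. That part is fine.

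The direction (i)~$\Rightarrow$~(ii) as you propose it does not work, and you have in fact put your finger on exactly why: Theorem~\ref{thm:Compactness} needs \eqref{eq:BoundedPolys}, and (i) together with the strong Lebesgue point hypothesis simply does not give you boundedness of $p_\ell(x_0)$, $q_\ell(x_0)$. This is not a technical inconvenience to be worked around by an ``averaged substitute'' inside the compactness theorem; the paper even has a remark after the theorem emphasizing that there are ergodic Jacobi matrices where (i) holds at a.e.\ point of the a.c.\ spectrum but \eqref{eq:BoundedPolys} fails on a set of positive Lebesgue measure, so the compactness route is closed off for real. The paper avoids Theorem~\ref{thm:Compactness} entirely. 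Instead it proves a separate convergence lemma (Lemma~\ref{lem:NiceConvergence}) tailored to the special rank-one-type structure of $H_{x_0,\ell}$: if $A_\ell=\bigl(\begin{smallmatrix}\alpha_\ell^2 & -\alpha_\ell\beta_\ell\\ -\alpha_\ell\beta_\ell & \beta_\ell^2\end{smallmatrix}\bigr)$ and the Ces\`aro means $\frac{1}{n}\sum_{j<n}A_j$ converge to a matrix $H$, then all four hypotheses of the general Lemma~\ref{lem:ContinuumConvergence} (average boundedness, $\sup_j\|A_j\|/n\to 0$, the blockwise tail estimate, and integral convergence) are automatically verified, \emph{without} any pointwise bound on the $A_j$. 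The point is that the diagonal entries are nonnegative, so Ces\`aro convergence of $\alpha_\ell^2$ and $\beta_\ell^2$ alone gives all the needed summability via Cauchy--Schwarz. The inputs to that lemma --- convergence of $\frac{1}{n}\sum p_\ell(x_0)^2$, $\frac{1}{n}\sum q_\ell(x_0)^2$, $\frac{1}{n}\sum p_\ell(x_0)q_\ell(x_0)$ to $\rho/w$, $\rho/\widetilde{w}$, $\operatorname{Re}F(x_0+i0)\,\rho/w$ respectively --- are supplied directly as consequences of (i) and the strong Lebesgue point hypothesis by Theorems 1.3 and 1.5 of \cite{bls}. That gives full-sequence convergence of $Q_{[tn]}^{(x_0)}(a/n)$ to the solution of the constant-coefficient canonical system, with no subsequence extraction and no de Branges uniqueness argument. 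Your proposed uniqueness step would in any case be delicate (Hamiltonians are determined only up to reparametrization, and you would have to pin down the parametrization and, separately, the off-diagonal entry of $H$, which is not visible from the CD kernel of $\mu$ alone); the paper's route sidesteps all of this.
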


\begin{remark}
Note that we do not require boundedness of the Jacobi matrix $x_0$-eigenfunctions, as in \eqref{eq:BoundedPolys}. This is significant since there are models satisfying (i) (and so (ii) and (iii)) of the above theorem but not \eqref{eq:BoundedPolys}. Indeed, it follows from the results of \cite{als} that for any ergodic Jacobi matrix (i) holds for almost every realization at almost every $x_0$ w.r.t.\ the absolutely continuous part of the spectral measure. As shown in \cite{avila}, such models exist where \eqref{eq:BoundedPolys} fails on a set of positive Lebesgue measure.
\end{remark}

Here is the main idea behind the proof of Theorems \ref{thm:Compactness} and \ref{thm:Sine}. Given a sequence of difference equations of the form
\beq \no
X^{(n)}_{j+1}-X^{(n)}_j=\frac{1}{n}A_j^{(n)}X_j^{(n)}
\eeq
on $[0,n]$, rescale $[0,n]$ to $[0,1]$ to obtain the function sequences $\widetilde{X}^{(n)}$ and $\widetilde{A}^{(n)}$. We shall show that, under certain conditions, the convergence
\beq \no
\sup_{0 \leq t \leq 1} \left \|\int_{0}^t \left(\widetilde{A}^{\left(n \right)}(s)-A(s) \right)\textrm{d}s \right \| \rightarrow 0
\eeq
implies the convergence of $\widetilde{X}^{(n)}$ to the solution of
\beq \no
\frac{\textrm{d}X}{\textrm{d}t}=A(t)X(t).
\eeq
Proposition 26 in \cite{KVV} has an analogous random statement, with an additional random factor of the order of $1/\sqrt{n}$. The uniform Lipschitz condition used there translates in our setting to uniform boundedness of the functions $\widetilde{A}^{(n)}$, which is unfortunately too strong for us. We thus need to take some care in our adaptation. This is described in the lemmas in the next section. The proofs of Theorems \ref{thm:Compactness} and \ref{thm:Sine} are completed in Section 4.

\section{Convergence of the Difference Equation}

Assume that for each $n$ we are given a finite sequence of $2 \times 2$ matrices $\left \{A^{(n)}_j \right \}_{j=0}^{n-1}$ and consider the vector difference equation
\beq \label{eq:GeneralDifference}
X^{(n)}_{j+1}-X^{(n)}_j=\frac{1}{n}A_j^{(n)}X_j^{(n)}
\eeq
($X^{(n)}_j \in \bbC^2$) with some prescribed initial condition $X_0^{(n)}=X_0$. We are concerned in this section with conditions ensuring convergence of the solutions to \eqref{eq:GeneralDifference} to the solution of the differential equation
\beq \label{eq:GeneralDifferential}
\frac{\textrm{d}X}{\textrm{d}t}=A(t)X(t)
\eeq
where $A: [0,1] \rightarrow M_2(\bbR)$ is an appropriate limit of $\left \{A^{(n)} \right \}$. It will be useful for us to consider convergence along subsequences.

\begin{lemma} \label{lem:ContinuumConvergence}
For each $n$ define the piecewise constant function \mbox{$\widetilde{A}^{(n)}: [0,1] \rightarrow M_2(\bbR)$} by
\beq \nonumber
\widetilde{A}^{(n)}(t)=A^{(n)}_{[nt]}.
\eeq
Let further
\beq \label{eq:ContintuousXDefinition}
\widetilde{X}^{(n)}(t)=X_{[nt]}^{(n)}
\eeq
where $X^{(n)}$ solves \eqref{eq:GeneralDifference} with initial condition $X_0$.

Assume that there exists a sequence $\{n_k\}_{k=1}^\infty$, $n_k \rightarrow \infty$ such that the following conditions hold:
\beq \label{eq:AverageBound}
\limsup_{k \rightarrow \infty} \frac{1}{n_k} \sum_{j=0}^{n_k} \left \|A_j^{(n_k)} \right \| =C < \infty,
\eeq
\beq \label{eq:LimitBound}
\lim_{k \rightarrow \infty} \frac{1}{n_k} \sup_j \left \|A_j^{(n_k)} \right \|=0,
\eeq
there exists a constant $\widetilde{C}>0$ such that for any $L>0$
\beq \label{eq:AverageDecay}
\limsup_{k \rightarrow \infty} \sup_{t \in [0,1]} \frac{1}{n_k} \sum_{j=\left[\frac{n_k}{L} \cdot [tL] \right]}^{\left[tn_k \right]} \left \| A_j^{\left(n_k \right)}\right \| \leq \frac{\widetilde{C}}{L}.
\end{equation}
Finally, assume that
\beq \label{eq:MatrixConvergence}
\sup_{0 \leq t \leq 1} \left \|\int_{0}^t \left(\widetilde{A}^{\left(n_k \right)}(s)-A(s) \right)\textrm{d}s \right \| \rightarrow 0
\eeq
as $n_k \rightarrow \infty$, where $A: [0,1] \rightarrow M_2(\bbR)$ is an $L^\infty$ matrix valued function.

Then we have that
\beq \label{eq:SolutionConvergence}
\widetilde{X}^{\left(n_k \right)} \rightarrow X
\eeq
uniformly in $[0,1]$ as $k \rightarrow \infty$ where $X:[0,1] \rightarrow \bbC^2$ solves \eqref{eq:GeneralDifferential} with initial condition $X(0)=X_0$.
\end{lemma}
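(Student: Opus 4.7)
The plan is to recast the discrete recursion \eqref{eq:GeneralDifference} as a discrete integral equation and then pass to the limit using Arzel\`a--Ascoli together with an integration-by-parts trick. Iterating \eqref{eq:GeneralDifference} gives
\[
\widetilde{X}^{(n)}(t) = X_0 + \int_0^{[nt]/n}\widetilde{A}^{(n)}(s)\widetilde{X}^{(n)}(s)\,\textrm{d}s,
\]
and the missing piece $\int_{[nt]/n}^t \widetilde{A}^{(n)}\widetilde{X}^{(n)}\,\textrm{d}s$ is dominated by $\frac{1}{n}\sup_j\|A_j^{(n)}\|\cdot\sup_s\|\widetilde{X}^{(n)}(s)\|$, which vanishes in the limit by \eqref{eq:LimitBound} once uniform boundedness is established. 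The target equivalent form of \eqref{eq:GeneralDifferential} is $X(t)=X_0+\int_0^tA(s)X(s)\,\textrm{d}s$.

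First I would establish uniform boundedness and equicontinuity for $\{\widetilde{X}^{(n_k)}\}$. The multiplicative estimate $\|X_{j+1}^{(n)}\|\le\|X_j^{(n)}\|(1+\frac{1}{n}\|A_j^{(n)}\|)$ combined with \eqref{eq:AverageBound} yields, via a discrete Gronwall, $M := \sup_k\sup_{t\in[0,1]}\|\widetilde{X}^{(n_k)}(t)\|<\infty$. For equicontinuity, condition \eqref{eq:AverageDecay} shows that for any $0\le s<t\le 1$ with $t-s<1/L$, the sum $\sum_{j=[n_ks]}^{[n_kt]-1}\frac{1}{n_k}\|A_j^{(n_k)}\|$ is dominated by at most two blocks of the type appearing in \eqref{eq:AverageDecay}, hence has $\limsup_k$ at most $2\widetilde{C}/L$. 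Therefore $\|\widetilde{X}^{(n_k)}(t)-\widetilde{X}^{(n_k)}(s)\|\le M(2\widetilde{C}/L+o(1))$, and Arzel\`a--Ascoli extracts a uniformly convergent sub-subsequence with continuous limit $X^*$.

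The key step is identifying $X^*$ with the solution of the ODE. Decompose
\[
\int_0^t \widetilde{A}^{(n_k)}\widetilde{X}^{(n_k)}\,\textrm{d}s = \int_0^t\widetilde{A}^{(n_k)}(\widetilde{X}^{(n_k)}-X^*)\,\textrm{d}s + \int_0^t(\widetilde{A}^{(n_k)}-A)X^*\,\textrm{d}s + \int_0^tAX^*\,\textrm{d}s.
\]
The first integral tends to $0$ by uniform convergence of $\widetilde{X}^{(n_k)}\to X^*$ combined with \eqref{eq:AverageBound}. For the middle integral one wants to integrate by parts using $B^{(n_k)}(s):=\int_0^s(\widetilde{A}^{(n_k)}-A)\,\textrm{d}u$, which by \eqref{eq:MatrixConvergence} converges uniformly to $0$; since $X^*$ is only known to be continuous, I would first approximate it uniformly by $C^1$ functions $X^*_\veps$, apply
\[
\int_0^t (\widetilde{A}^{(n_k)}-A)X^*_\veps\,\textrm{d}s = B^{(n_k)}(t)X^*_\veps(t) - \int_0^tB^{(n_k)}(s)(X^*_\veps)'(s)\,\textrm{d}s,
\]
and absorb the approximation error $\|X^*-X^*_\veps\|_\infty(C+\|A\|_\infty)$ by choosing $\veps$ small. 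Letting $k\to\infty$ and then $\veps\to 0$ shows $X^*(t)=X_0+\int_0^tA(s)X^*(s)\,\textrm{d}s$; Gronwall uniqueness then forces $X^*=X$. Since every subsequence of $\{\widetilde{X}^{(n_k)}\}$ has a further subsequence converging to the same $X$, the whole sequence converges uniformly.

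The main obstacle is precisely this middle term: the hypothesis \eqref{eq:MatrixConvergence} controls only antiderivatives of $\widetilde{A}^{(n_k)}-A$, and a priori $X^*$ is merely continuous. The smoothing-plus-integration-by-parts step is the essential substitute for the uniform Lipschitz/boundedness hypothesis used in \cite{KVV}, and it is needed precisely because here the pointwise size of $A_j^{(n)}$ is controlled only on average through \eqref{eq:AverageBound}--\eqref{eq:AverageDecay}.
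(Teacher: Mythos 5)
Your proof is correct in substance but takes a genuinely different route from the paper's. The paper never extracts a subsequential limit: it works with a quantitative Picard-type estimate, showing $\|X - \widetilde{X}^{(n_k)}\|_\infty \leq e^K M(\widetilde{X}^{(n_k)})$ where $M$ measures the defect in the integral equation, and then bounds that defect by introducing the coarse-grid piecewise-constant approximant $\widetilde{X}^{(n_k,L)}(t) = \widetilde{X}^{(n_k)}([tL]/L)$. The hard term $\int_0^t (\widetilde{A}^{(n_k)}-A)\widetilde{X}^{(n_k,L)}\,\mathrm{d}s$ is then handled by an Abel/telescoping argument over the $L$ grid intervals, since $\widetilde{X}^{(n_k,L)}$ is constant on each, which converts the estimate directly into $L$ applications of \eqref{eq:MatrixConvergence}. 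You instead use Arzel\`a--Ascoli to extract a limit $X^*$, and deal with the same hard term (now $\int_0^t(\widetilde{A}^{(n_k)}-A)X^*\,\mathrm{d}s$) by smoothing $X^*$ and integrating by parts against the primitive $B^{(n_k)}$. Both devices do the same job---convert control of antiderivatives into control of the bilinear term---and both use the full strength of \eqref{eq:AverageBound}, \eqref{eq:LimitBound}, \eqref{eq:AverageDecay}, \eqref{eq:MatrixConvergence}. The paper's version is more quantitative and avoids the compactness/sub-subsequence scaffolding; yours is somewhat shorter conceptually but then needs a uniqueness step (discrete Gronwall) and the subsequence-of-subsequences argument to promote subsequential convergence to full convergence.

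One small point worth tightening: $\widetilde{X}^{(n_k)}$ is a step function, not continuous, so classical Arzel\`a--Ascoli does not apply verbatim. This is easily repaired: the jump at $j/n_k$ has size at most $\frac{1}{n_k}\|A_j^{(n_k)}\|\,M$, which tends to zero uniformly in $j$ by \eqref{eq:LimitBound}, so either pass to the piecewise-linear interpolant (which is uniformly close to $\widetilde{X}^{(n_k)}$ and equicontinuous by your $2\widetilde{C}/L$ estimate) or invoke the standard variant of Arzel\`a--Ascoli for sequences with vanishing jump sizes. Once that is noted, the argument goes through.
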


\begin{proof}
First note that $X$ is the unique solution to
\beq \label{eq:IntegralEquation}
X(t)=X_0+\int_0^t A(s)X(s) \textrm{d}s.
\eeq
Further, since $\left \|A(t) \right \|_\infty=K<\infty$ we see that $(t,Y) \mapsto A(t)Y$ is uniformly Lipschitz in $Y \in \bbC^2$. It follows that for any
$ \Phi: [0,1] \rightarrow \bbC^2$ satisfying $\Phi(0)=X_0$
\beq \label{eq:Estimate1}
\left \|X(t)-\Phi(t) \right \|_\infty \leq \sum_{j=0}^\infty  \frac{K^j M\left(\Phi \right)}{j!}=e^K M\left(\Phi \right)
\eeq
where
\beq \nonumber
M \left(\Phi \right)=\sup_t \left \|\Phi(0)+\int_0^t A(s)\Phi(s)\textrm{d}s-\Phi(t) \right \|.
\eeq
This is because the sequence $\Phi^{(j)}$ defined by $\Phi^{(0)}=\Phi$ and
\beq \no
\Phi^{(j)}(t)=\Phi^{(j-1)}(0)+\int_0^t A(s)\Phi^{(j-1)}(s)\textrm{d}s
\eeq
converges uniformly to $X$, so that we may write
\beq \nonumber
X(t)-\Phi(t)=\lim_{j \rightarrow \infty}\Phi^{(j)}(t)-\Phi^{(0)}(t)=\sum_{j=1}^\infty\left(\Phi^{(j)}(t)-\Phi^{(j-1)}(t) \right).
\eeq
Now note that
\beq \nonumber
M\left(\Phi \right)=\left \|\Phi^{(1)}-\Phi^{(0)} \right \|_\infty
\eeq
and that
\beq \nonumber
\Phi^{(j+1)}(t)-\Phi^{(j)}(t)=\int_0^t A(s) \left(\Phi^{(j)}(s)-\Phi^{(j-1)}(s) \right)\textrm{d}s
\eeq
so that \eqref{eq:Estimate1} follows from iteratively estimating the terms in the sum.

Thus, it suffices to show that
\beq \label{eq:BoundConvergence}
\lim_{k \rightarrow \infty} M\left(\widetilde{X}^{\left(n_k \right)} \right)=0.
\eeq
First note that
\beq \label{eq:XnBoundedness}
\begin{split}
\left\| X^{\left(n_k \right)}_j \right\| & =\left\| \left(\textrm{Id}+\frac{A_{j-1}^{\left(n_k \right)}}{n_k} \right)\left(\textrm{Id}+\frac{A_{j-2}^{\left(n_k \right)}}{n_k} \right) \cdots \left(\textrm{Id}+\frac{A_{0}^{\left(n_k \right)}}{n_k} \right)X_0\right \| \\
& \leq e^{\frac{1}{n_k}\sum_{\ell=0}^{j-1} \left \|A_{\ell}^{\left(n_k \right)} \right \|}\left \|X_0 \right \|
\leq e^{C} \left \| X_0 \right \|
\end{split}
\eeq
so that $X^{\left(n_k \right)}_j$ and so also $\widetilde{X}^{\left(n_k \right)}$ is uniformly bounded.

Now note that
\beq \nonumber
\widetilde{X}^{\left(n_k\right)}(t)=X_0+\int_0^{[tn_k]/n_k} \widetilde{A}^{\left(n_k \right)}(s)\widetilde{X}^{\left(n_k \right)}(s) \textrm{d}s
\eeq
so that
\beq \nonumber
\begin{split}
&\left \|\widetilde{X}^{\left(n_k \right)}(t)-X_0-\int_0^t \widetilde{A}^{\left(n_k \right)}(s)\widetilde{X}^{\left(n_k \right)}(s) \textrm{d}s  \right \|_\infty \\
&\quad = \left \| \int_{\frac{[tn_k]}{n_k}}^t \widetilde{A}^{(n_k)}(s) \widetilde{X}^{(n_k)}(s) \textrm{d}s \right \|_\infty \\
& \quad \leq \frac{1}{n_k} \sup_j \left \| A^{(n_k)}_j X^{(n_k)}_j\right \| \rightarrow 0
\end{split}
\eeq
as $k \rightarrow \infty$, by the boundedness of $X^{\left(n_k \right)}_j$ and \eqref{eq:LimitBound}.
Thus
\beq \nonumber
M\left( \widetilde{X}^{\left(n_k \right)} \right)=\left \| \int_0^t \left(\widetilde{A}^{(n_k)}(s)-A(s) \right)\widetilde{X}^{(n_k)}(s) \textrm{d} s \right \|_\infty+o(1).
\eeq
As in the proof of \cite[Proposition 26]{KVV}, we shall show
\begin{equation} \nonumber
\left \| \int_0^t \left[ \left(\widetilde{A}^{(n_k)}-A \right)\widetilde{X}^{(n_k)} \right] (s) \textrm{d} s \right \|_\infty \rightarrow 0
\end{equation}
by subdividing $[0,1]$ into $L<<n_k$ intervals and approximating $\widetilde{X}^{(n_k)}$ by a constant on these intervals.

So fix $L>0$ and define $\widetilde{X}^{(n_k,L)}$ by
\begin{equation} \nonumber
\widetilde{X}^{(n_k,L)}(t)=\widetilde{X}^{(n_k)}\left(\frac{[tL]}{L} \right).
\end{equation}
That is, we divide $[0,1]$ into $L$ equal intervals and set $\widetilde{X}^{(n_k,L)}$ to be constant on each interval with its value equal to the first value of $\widetilde{X}^{(n_k)}$ occurring there.

Now write
\begin{equation} \label{eq:LPartition}
\begin{split}
& \left \| \int_0^t \left[\left(\widetilde{A}^{(n_k)}-A \right)\widetilde{X}^{(n_k)} \right] (s) \textrm{d} s \right \| \\
\quad & \leq \left \| \int_0^t \left[\widetilde{A}^{(n_k)}\left(\widetilde{X}^{(n_k)}-\widetilde{X}^{(n_k,L)} \right) \right](s) \textrm{d} s \right \| \\
&\qquad + \left \| \int_0^t A\left(\widetilde{X}^{(n_k)}-\widetilde{X}^{(n_k,L)} \right) (s) \textrm{d} s \right \| \\
& \qquad +\left \| \int_0^t \left[\left(\widetilde{A}^{(n_k)}-A \right)\widetilde{X}^{(n_k,L)} \right] (s) \textrm{d} s \right \|.
\end{split}
\end{equation}

For the last term write
\begin{equation} \nonumber
\begin{split}
& \int_0^t \left[\left(\widetilde{A}^{(n_k)}-A \right)\widetilde{X}^{(n_k,L)} \right] (s) \textrm{d} s \\
& \quad =\sum_{j=0}^{[tL]-1} \int_{j/L}^{\min\{t,(j+1)/L\}} \left[\left(\widetilde{A}^{(n_k)}-A \right)\widetilde{X}^{(n_k,L)} \right] (s) \textrm{d} s \\
&\quad = \sum_{j=0}^{[tL]-1} \left(\int_{j/L}^{\min\{t,(j+1)/L\}} \left(\widetilde{A}^{(n_k)}-A \right) (s) \textrm{d} s \right)\widetilde{X}^{(n_k,L)} (j/L) \\
\end{split}
\end{equation}
and note that
\begin{equation} \nonumber
\begin{split}
& \left\|\left(\int_{j/L}^{\min\{t,(j+1)/L\}} \left(\widetilde{A}^{(n_k)}-A \right) (s) \textrm{d}s\right)\widetilde{X}^{(n_k,L)} (j/L) \right \| \\
& \quad = \Bigg \| \left( \int_{0}^{\min\{t,(j+1)/L\}} \left(\widetilde{A}^{(n)}-A \right) (s) \textrm{d}s\right) \widetilde{X}^{(n_k,L)} (j/L)  \\
&\qquad -\left(\int_{0}^{j/L} \left(\widetilde{A}^{(n)}-A \right) (s) \textrm{d}s\right) \widetilde{X}^{(n_k,L)} (j/L) \Bigg \| \\
& \quad \leq 2 \sup_t \left \| \int_0^t \left(\widetilde{A}^{(n_k)}(s)-A(s) \right) \textrm{d}s \right \| \left \| \widetilde{X}^{(n_k,L)}(j/L) \right \|
\end{split}
\end{equation}
so that
\begin{equation} \label{eq:LApproximation}
\begin{split}
& \left \| \int_0^t \left[\left(\widetilde{A}^{(n_k)}-A \right)\widetilde{X}^{(n_k,L)} \right] (s) \textrm{d} s \right \|_\infty  \\
&\quad \leq 2L \sup_{j } \left \| \widetilde{X}^{(n_k,L)}(j/L) \right \| \left \|\int_0^t \left(\widetilde{A}^{(n_k)}(s)-A(s) \right) \textrm{d}s \right \|_\infty.
\end{split}
\end{equation}
For any fixed $L$ this is $o(1)$ by \eqref{eq:MatrixConvergence} and \eqref{eq:XnBoundedness}. We shall show that it is possible to choose $L$ so large as to make the other terms on the right hand side of \eqref{eq:LPartition} arbitrarily small. For this, note first that
\begin{equation} \nonumber
\begin{split}
& \left \| \int_0^t \left[\widetilde{A}^{(n_k)}\left(\widetilde{X}^{(n_k)}-\widetilde{X}^{(n_k,L)} \right) \right](s) \textrm{d} s \right \|_\infty \\
&\quad \leq \int_{0}^t \left \|\widetilde{A}^{(n_k)}(s) \right \| \textrm{d}s
\left \| \widetilde{X}^{(n_k)}-\widetilde{X}^{(n_k,L)} \right \|_\infty \\
& \quad \leq \frac{1}{n_k}\sum_{j=0}^{[n_k t]} \left \| A_j^{(n_k)} \right \|
\left \| \widetilde{X}^{(n_k)}-\widetilde{X}^{(n_k,L)} \right \|_\infty \\
& \quad \leq C \left \| \widetilde{X}^{(n_k)}-\widetilde{X}^{(n_k,L)} \right \|_\infty \\
\end{split}
\end{equation}
and similarly
\begin{equation} \nonumber
\begin{split}
& \left \| \int_0^t A\left(\widetilde{X}^{(n_k)}-\widetilde{X}^{(n_k,L)} \right) (s) \textrm{d} s \right \| \\
& \quad \leq K\left \| \widetilde{X}^{(n_k)}-\widetilde{X}^{(n_k,L)} \right \|_\infty.
\end{split}
\end{equation}
Thus we need to show that for any $\varepsilon$ there is $L_0$ such that
\begin{equation} \label{eq:LlargeEnough}
\left \| \widetilde{X}^{(n_k)}-\widetilde{X}^{(n_k,L)} \right \|_\infty < \varepsilon
\end{equation}
for $L > L_0$ for any $n_k$ large enough. Taking $n_k \to \infty$ in \eqref{eq:LApproximation} will then finish the proof.

Write
\beq \no
\widetilde{X}^{(n_k)}(t)=\left(\textrm{Id}+\frac{A^{(n_k)}_{[t n_k]-1}}{n} \right)\cdots\left(\textrm{Id}+\frac{A^{(n_k)}_{\left[\frac{n_k}{L} \cdot [tL] \right]}}{n_k} \right)\widetilde{X}^{(n_k,L)}(t)
\eeq
to see that
\beq \no
\begin{split}
& \left \| \widetilde{X}^{(n_k)}-\widetilde{X}^{(n_k,L)} \right \|_\infty \\
&\quad \leq \sup_t \left \| \left(\left(\textrm{Id}+\frac{A^{(n_k)}_{[tn_k]-1}}{n_k} \right)\cdots  \left(\textrm{Id}+\frac{A^{(n_k)}_{\left[\frac{n_k}{L} \cdot [tL] \right]}}{n_k} \right)-\textrm{Id} \right)\widetilde{X}^{(n_k,L)}(t) \right \| \\
&\quad \leq \sup_t\left \| \left(\textrm{Id}+\frac{A^{(n_k)}_{[tn_k]-1}}{n_k} \right)\cdots \left(\textrm{Id}+\frac{A^{(n_k)}_{\left[\frac{n_k}{L} \cdot [tL] \right]}}{n_k} \right)-\textrm{Id} \right \| \\
& \qquad \times\sup_t \left \| \widetilde{X}^{(n_k,L)}(t) \right \| \\
&\quad \leq e^C \left \|X_0 \right \| \sup_t\left \| \left(\textrm{Id}+\frac{A^{(n_k)}_{[tn_k]-1}}{n_k} \right)\cdots \left(\textrm{Id}+\frac{A^{(n_k)}_{\left[\frac{n_k}{L} \cdot [tL] \right]}}{n_k} \right)-\textrm{Id} \right \|
\end{split}
\eeq
by \eqref{eq:XnBoundedness}. But
\beq \no
\begin{split}
& \sup_t\left \| \left(\textrm{Id}+\frac{A^{(n_k)}_{[tn_k]-1}}{n} \right)\cdots \left(\textrm{Id}+\frac{A^{(n_k)}_{\left[\frac{n_k}{L} \cdot [tL] \right]}}{n_k} \right)-\textrm{Id} \right \| \\
&\quad \leq \left(1+\frac{\left \| A^{(n_k)}_{[tn_k]-1} \right \|}{n_k} \right) \cdots \left(1+\frac{\left \| A^{(n_k)}_{\left[\frac{n_k}{L} \cdot [tL] \right]} \right \|}{n_k} \right)-1 \\
& \quad \leq \exp\left( \frac{1}{n_k}\sum_{j=\left[\frac{n_k}{L} \cdot [tL] \right]}^{[tn_k]} \left \| A^{(n_k)}_j \right \| \right)-1,
\end{split}
\eeq
where the first inequality follows by expanding the product on both sides before applying the triangle inequality. By \eqref{eq:AverageDecay}, for $L>\frac{\widetilde{C}}{1+\varepsilon}$ this is less than $\varepsilon$ for all sufficiently large $n_k$. This finishes the proof.
\end{proof}

Our next two lemmas deal with situations where the conditions of Lemma \ref{lem:ContinuumConvergence} are guaranteed to exist. The first one leads immediately to Theorem \ref{thm:Compactness}. The second one is behind Theorem \ref{thm:Sine}.

\begin{lemma} \label{lem:CompactnessCondition}
With the notation as in the beginning of the section, assume that there exists a constant $C>0$ such that for all $n$
\begin{equation} \label{eq:BoundednessCondition}
\max_j \left \|A_j^{(n)} \right \| \leq C.
\end{equation}
Then there exists an $L^\infty$ function $A: [0,1] \rightarrow M_2(\mathbb{C})$ and a sequence $\{n_k\}_{k=1}^\infty$ with $\lim_{k \rightarrow \infty}n_k=\infty$, such that for any $X_0$, the sequence $\widetilde{X}^{(n_k)}$ defined by \eqref{eq:ContintuousXDefinition} and \eqref{eq:GeneralDifference} with $\widetilde{X}^{(n_k)}(0)=X_0$ converges uniformly to the solution of \eqref{eq:GeneralDifferential} with $X(0)=X_0$.
\end{lemma}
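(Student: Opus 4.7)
My plan is to verify the four hypotheses \eqref{eq:AverageBound}--\eqref{eq:MatrixConvergence} of Lemma \ref{lem:ContinuumConvergence}, extracting a subsequence and a limiting $A$ along the way, and then invoke that lemma directly. The uniform bound \eqref{eq:BoundednessCondition} makes three of the four conditions trivial: for any subsequence $n_k \to \infty$, we have $\frac{1}{n_k}\sum_{j=0}^{n_k}\|A_j^{(n_k)}\| \leq C(1+1/n_k)$, giving \eqref{eq:AverageBound}; $\frac{1}{n_k}\sup_j \|A_j^{(n_k)}\| \leq C/n_k \to 0$, giving \eqref{eq:LimitBound}; and for any $L>0$,
\[
\frac{1}{n_k}\sum_{j=[\tfrac{n_k}{L}\cdot[tL]]}^{[tn_k]} \|A_j^{(n_k)}\| \;\leq\; \frac{C}{n_k}\left([tn_k]-\left[\tfrac{n_k}{L}[tL]\right]+1\right) \;\leq\; \frac{C}{L}+\frac{2C}{n_k},
\]
uniformly in $t\in[0,1]$, which gives \eqref{eq:AverageDecay} with $\widetilde{C}=2C$ (say) once $n_k$ is large.

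The real content is producing a subsequence realizing \eqref{eq:MatrixConvergence}. I would consider the family of antiderivatives
\[
F_n(t) \;=\; \int_0^t \widetilde{A}^{(n)}(s)\,\textrm{d}s \;\in\; M_2(\mathbb{C}),\qquad t\in[0,1].
\]
Since $\|\widetilde{A}^{(n)}(s)\| \leq C$ uniformly, $\{F_n\}$ is uniformly bounded by $C$ and uniformly Lipschitz with constant $C$, so equicontinuous. Viewing $M_2(\mathbb{C})\cong \mathbb{C}^4$ and applying the Arzel\`a--Ascoli theorem componentwise, I extract a subsequence $\{n_k\}$ and a continuous matrix-valued $G:[0,1]\to M_2(\mathbb{C})$ with $G(0)=0$ and $F_{n_k}\to G$ uniformly. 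The uniform Lipschitz bound passes to $G$, so $G$ is Lipschitz with constant $C$, hence absolutely continuous; define $A(t):=G'(t)$ a.e., so that $A\in L^\infty$ with $\|A\|_\infty\leq C$ and $G(t)=\int_0^t A(s)\,\textrm{d}s$. Uniform convergence $F_{n_k}\to G$ is exactly \eqref{eq:MatrixConvergence}.

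With \eqref{eq:AverageBound}--\eqref{eq:MatrixConvergence} all in place for the subsequence $\{n_k\}$ and this $A$, Lemma \ref{lem:ContinuumConvergence} applies to any initial condition $X_0$ and yields $\widetilde{X}^{(n_k)} \to X$ uniformly on $[0,1]$, where $X$ solves \eqref{eq:GeneralDifferential} with $X(0)=X_0$. Note that the extracted subsequence and the function $A$ depend only on $\{A_j^{(n)}\}$, not on $X_0$, so the same subsequence works for all initial data. There is no serious obstacle here; the only subtlety is making sure the compactness argument produces a single $A$ that works uniformly over initial conditions, which is automatic because $\{F_n\}$ does not involve $X_0$.
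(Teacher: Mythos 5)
Your proof is correct, and the key step is handled by a genuinely different route than the paper's. The paper extracts a weakly $L^2$-convergent subsequence $\widetilde{A}^{(n_k)}\rightharpoonup A$, which gives $\int_0^t(\widetilde{A}^{(n_k)}-A)\to 0$ for each fixed $t$ by pairing with $1_{[0,t]}$; it then runs a separate contradiction argument (using Cauchy--Schwarz) to upgrade pointwise to uniform convergence in $t$, and a further appeal to the Banach--Saks theorem to conclude $\|A\|_\infty\le C$. You instead apply Arzel\`a--Ascoli directly to the antiderivatives $F_n(t)=\int_0^t \widetilde{A}^{(n)}$, which are uniformly bounded and uniformly Lipschitz with constant $C$. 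This yields uniform convergence $F_{n_k}\to G$ in one stroke (no contradiction argument needed), and the Lipschitz bound on $G$ immediately gives $A:=G'\in L^\infty$ with $\|A\|_\infty\le C$ and $G(t)=\int_0^t A$ (no Banach--Saks needed). The two approaches are essentially equivalent in strength here, but yours is shorter and more self-contained; the paper's weak-compactness route is perhaps more robust when uniform Lipschitz control of the antiderivatives is unavailable, but under hypothesis \eqref{eq:BoundednessCondition} that control is free. Your verification of \eqref{eq:AverageBound}, \eqref{eq:LimitBound}, \eqref{eq:AverageDecay} matches the paper's (trivial) observations, and your remark that the extracted subsequence and $A$ are independent of $X_0$ is the right thing to flag.

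One cosmetic point: your bound on the block sum gives $\limsup_k \le C/L$, so \eqref{eq:AverageDecay} holds with $\widetilde{C}=C$; writing $\widetilde{C}=2C$ is harmless but unnecessary.
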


\begin{proof}
Letting
\begin{equation} \nonumber
\widetilde{A}^{(n)}(t)=A^{(n)}_{[tn]}
\end{equation}
the condition \eqref{eq:BoundednessCondition} implies that $\widetilde{A}^{(n)}$ is uniformly bounded in $L^\infty[0,1]$ and so also in $L^2[0,1]$. Thus there is a subsequence, $\widetilde{A}^{(n_k)}$, and a function $A \in L^2\left([0,1], M_2(\mathbb{C})\right)$ such that
\begin{equation} \nonumber
\widetilde{A}^{(n_k)} \underset{k \rightarrow \infty}{\longrightarrow} A
\end{equation}
weakly in $L^2$. Functions in $L^2\left([0,1], M_2(\mathbb{C})\right)$ can be thought of as vector valued square integrable functions, with values in $\mathbb{C}^4$. In particular, the weak convergence implies that for any $0 \leq t\leq 1$
\begin{equation} \nonumber
\int_{0}^t \left(\widetilde{A}^{(n_k)}-A \right)(s) \textrm{d}s= \left(1_{[0,t]},\left(\widetilde{A}^{(n_k)}-A \right)  \right) \rightarrow 0
\end{equation}
as $k \rightarrow \infty$, where $1_{[0,t]}$ has the value $(1,1,1,1)$ on $[0,t]$ and $(0,0,0,0)$ elsewhere. To show uniformity in $t$ assume the convergence is not uniform. By restricting to a subsequence, this implies that for some $\varepsilon>0$ there is a sequence $\{t_k \}_{k=1}^\infty$ such that
\begin{equation} \nonumber
\left \|\int_{0}^{t_k} \left(\widetilde{A}^{(n_k)}-A \right)(s) \textrm{d}s \right \| \geq \varepsilon.
\end{equation}
By restricting to a further subsequence we may assume that $t_k \rightarrow t'$ for some $t' \in [0,1]$. Let $K_1$ be such that for any $k > K_1$
\beq \nonumber
\left \|\int_{0}^{t'} \left(\widetilde{A}^{(n_k)}-A \right)(s) \textrm{d}s \right \| < \frac{\varepsilon}{2}
\eeq
and let $K_2$ be such that for any $k>K_2$
\beq \label{eq:tkt}
|t_k-t'|<\frac{\varepsilon}{4 \left(\|A\|_2+C \right)}.
\eeq
Then for any $k > \max(K_1,K_2)$
\beq \no
\begin{split}
& \left \|\int_{t_k}^{t'} \left(\widetilde{A}^{(n_k)}-A \right)(s) \textrm{d}s  \right \| \\
&\quad = \left \|\int_{0}^{t'} \left(\widetilde{A}^{(n_k)}-A \right)(s) \textrm{d}s-
\int_{0}^{t_k} \left(\widetilde{A}^{(n_k)}-A \right)(s) \textrm{d}s\right \| \\
&\quad  \geq \left \|\int_{0}^{t_k} \left(\widetilde{A}^{(n_k)}-A \right)(s) \textrm{d}s \right \|- \left \| \int_{0}^{t'} \left(\widetilde{A}^{(n_k)}-A \right)(s) \textrm{d}s\right \| \\
&\quad \geq \frac{\varepsilon}{2}.
\end{split}
\eeq
At the same time, however, by \eqref{eq:tkt} and Cauchy-Schwarz
\beq \no
\begin{split}
\left \|\int_{t_k}^{t'} \left(\widetilde{A}^{(n_k)}-A \right)(s) \textrm{d}s  \right \|& = \left \| \left(1_{[t_k,t']}, \left(\widetilde{A}^{(n_k)}-A \right) \right) \right \| \\
&\leq \frac{\varepsilon}{4}.
\end{split}
\eeq
This is absurd. Thus, we see that
\beq \no
\sup_{0 \leq t \leq 1} \left \| \int_{0}^t \left(\widetilde{A}^{(n_k)}(s)-A(s) \right)\textrm{d}s \right\| \rightarrow 0
\eeq
as $k \rightarrow \infty$.
Note in addition that by the Banach-Saks Theorem there is a subsequence $\left\{\widetilde{A}^{\left(n_{k_j} \right)} \right \}_j$ whose Ces\`aro means converge in $L^2$ to $A$. This implies pointwise convergence a.e.\ along a subsequence (of the Ces\`aro means) and thus we see that $\|A \|_\infty \leq C$.

Conditions \eqref{eq:AverageBound} and \eqref{eq:LimitBound} follow immediately from \eqref{eq:BoundednessCondition}. To see that \eqref{eq:AverageDecay} follows as well, simply note that for any $t$ and $L$ the interval $\left[\frac{n_k}{L}\left[tL \right] \right] \leq j \leq [tn_k]$ has at most $(\frac{n_k}{L}+1)$ terms. Thus all the conditions of Lemma \ref{lem:ContinuumConvergence} are satisfied. Its conclusion finishes the proof.
\end{proof}

\begin{lemma} \label{lem:NiceConvergence}
Let $\{\alpha_j\}_{j=0}^\infty$ and $\{\beta_j\}_{j=0}^\infty$ be two sequences of real numbers and for any $j$, let
\beq \label{eq:AjDefinition}
A_j=\left(\begin{array}{cc} \alpha_j^2 & -\alpha_j \beta_j \\
-\alpha_j \beta_j & \beta_j^2 \end{array} \right)
\eeq
Assume that
\beq \label{eq:NiceConvergence}
\lim_{n \rightarrow \infty}\frac{1}{n}\sum_{j=0}^{n-1} A_j=H\in M_2(\mathbb{C}).
\eeq
Then, with $A_j^{(n)}=A_j$ and $A(s) \equiv H$, we have that for any $X_0$, the sequence $\widetilde{X}^{(n)}$ defined by \eqref{eq:ContintuousXDefinition} and \eqref{eq:GeneralDifference} with $\widetilde{X}^{(n)}(0)=X_0$ converges uniformly to the solution of \eqref{eq:GeneralDifferential} with $X(0)=X_0$.
\end{lemma}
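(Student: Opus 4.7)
The plan is to reduce this lemma to Lemma \ref{lem:ContinuumConvergence} by verifying that, with the choices $A_j^{(n)} = A_j$ and $A(s) \equiv H$, all four of its hypotheses \eqref{eq:AverageBound}--\eqref{eq:MatrixConvergence} are satisfied. The crucial structural observation is that each $A_j$ is positive semidefinite and of rank (at most) one, so its operator norm coincides with its trace:
\[
\|A_j\| = \text{tr}(A_j) = \alpha_j^2 + \beta_j^2.
\]
This identity converts the matrix-valued Ces\`aro hypothesis \eqref{eq:NiceConvergence} into a scalar Ces\`aro statement: applying the trace to both sides of \eqref{eq:NiceConvergence} gives
\[
\tfrac{1}{n}\sum_{j=0}^{n-1}\|A_j\| \;\longrightarrow\; \text{tr}(H),
\]
which is exactly \eqref{eq:AverageBound} with $C = \text{tr}(H)$.

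For \eqref{eq:LimitBound} I would argue as follows. Setting $S_n = \sum_{j=0}^{n-1}\|A_j\|$, the fact that $S_n/n\to\text{tr}(H)$ yields $\|A_{n-1}\|/n = S_n/n - (1-1/n)S_{n-1}/(n-1) \to 0$, hence $\|A_j\|/j \to 0$. Thus for any $\varepsilon > 0$ one can pick $N$ with $\|A_j\| < \varepsilon j$ for $j \geq N$, which gives
\[
\sup_{0\le j\le n-1}\tfrac{\|A_j\|}{n} \;\leq\; \max\!\left(\tfrac{\max_{j<N}\|A_j\|}{n},\,\varepsilon\right) \;\longrightarrow\; \varepsilon.
\]
Since $\varepsilon$ is arbitrary, \eqref{eq:LimitBound} follows. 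For \eqref{eq:AverageDecay} with parameter $L$, I would write $S_n = \text{tr}(H)\cdot n + \varepsilon_n$ with $\varepsilon_n/n\to 0$, and compute
\[
\tfrac{1}{n}\!\!\sum_{j=[\frac{n}{L}[tL]]}^{[tn]}\!\!\|A_j\| \;=\; \text{tr}(H)\,\tfrac{[tn]-[\tfrac{n}{L}[tL]]+1}{n} \;+\; \tfrac{\varepsilon_{[tn]+1}-\varepsilon_{[\frac{n}{L}[tL]]}}{n}.
\]
The first term is bounded by $\text{tr}(H)/L + O(1/n)$ uniformly in $t$. For the second, split $\sup_{j\le n}|\varepsilon_j|/n$ into $j<N$ (controlled by $N/n\to0$) and $j\geq N$ (controlled by $|\varepsilon_j|/j<\varepsilon$), giving $\sup_{j\le n}|\varepsilon_j|/n\to 0$; this is uniform in $t$ and verifies \eqref{eq:AverageDecay} with $\tilde{C}=\text{tr}(H)$.

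Finally, for \eqref{eq:MatrixConvergence}, write $\int_0^t \tilde{A}^{(n)}(s)\,ds = \tfrac{1}{n}\sum_{j=0}^{[tn]-1}A_j + \theta_n(t)$, where $\theta_n(t)$ has length at most $1/n$ and hence, by \eqref{eq:LimitBound}, tends to $0$ uniformly in $t$. Setting $m=[tn]$ and $E_m = \tfrac{1}{m}\sum_{j=0}^{m-1}A_j - H$, we have $E_m\to 0$, and
\[
\tfrac{1}{n}\!\sum_{j=0}^{m-1}\!A_j - tH \;=\; \tfrac{m}{n}E_m \;+\; \tfrac{m-tn}{n}H,
\]
where $|m-tn|\leq 1$ takes care of the second summand. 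For the first summand, for any $\varepsilon$ pick $M$ with $\|E_m\|<\varepsilon$ for $m\geq M$; for $m<M$ the $A_j$'s are bounded on a finite initial segment, making $\tfrac{m}{n}\|E_m\|\leq M\cdot(\text{const})/n$. This gives $\sup_t\|\tfrac{m}{n}E_m\|\to 0$ and establishes \eqref{eq:MatrixConvergence}.

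The main obstacle, to the extent there is one, is keeping all the convergences \emph{uniform in} $t\in[0,1]$—the standard Ces\`aro convergence argument is pointwise in $t$, and small values of $m=[tn]$ require a separate treatment that exploits the factor $m/n$. Once all four hypotheses are in place, Lemma \ref{lem:ContinuumConvergence} directly yields the desired uniform convergence of $\widetilde{X}^{(n)}$ to the solution of \eqref{eq:GeneralDifferential}.
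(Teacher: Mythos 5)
Your proposal is correct and follows the same overall route as the paper: reduce to Lemma \ref{lem:ContinuumConvergence} by verifying its four hypotheses \eqref{eq:AverageBound}--\eqref{eq:MatrixConvergence}, with the subsequence taken to be all of $\mathbb{N}$. The difference is in how the norm bounds are obtained. The paper bounds $\|A_j\|$ by applying Cauchy--Schwarz to the off-diagonal Ces\`aro average $\tfrac1n\sum_j|\alpha_j\beta_j|\le\sqrt{H_{11}H_{22}}$, and it reuses this device in the verification of \eqref{eq:AverageDecay}. You instead observe that $A_j=v_jv_j^{\mathrm T}$ with $v_j=(\alpha_j,-\beta_j)^{\mathrm T}$ is rank-one positive semidefinite, so $\|A_j\|=\operatorname{tr}(A_j)=\alpha_j^2+\beta_j^2$ \emph{exactly}. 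Tracing \eqref{eq:NiceConvergence} then gives the scalar Ces\`aro statement $\tfrac1n\sum_j\|A_j\|\to\operatorname{tr}(H)$, from which \eqref{eq:AverageBound}, \eqref{eq:LimitBound} and \eqref{eq:AverageDecay} (with $\widetilde C=\operatorname{tr}(H)$) all fall out by elementary partial-sum manipulations, without ever invoking Cauchy--Schwarz. Your verification of \eqref{eq:MatrixConvergence} via the decomposition $\tfrac1n\sum_{j<m}A_j-tH=\tfrac{m}{n}E_m+\tfrac{m-tn}{n}H$ is essentially the paper's argument presented in a slightly different guise, and you correctly isolate the uniformity-in-$t$ issue, which requires a separate treatment of small $m=[tn]$ (formally one should note that the $m=0$ term is handled directly since $\tfrac1n\sum_{j<0}=0$ and $|tH|\le\|H\|/n$, but this is immaterial). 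The trace identity buys a cleaner and somewhat more transparent proof at essentially no cost, and exploits the specific positive-semidefinite rank-one structure of $H_{x_0,\ell}$ that the paper does not make fully explicit.
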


\begin{proof}
We need to show that the conditions of Lemma \ref{lem:ContinuumConvergence} are satisfied for the sequence $n_k=k$.
Note first that since
\beq \nonumber
\lim_{n \rightarrow \infty}\frac{1}{n}\sum_{j=0}^{n-1}\alpha_j^2=H_{11} \geq 0
\eeq
and
\beq \nonumber
\lim_{n \rightarrow \infty}\frac{1}{n}\sum_{j=0}^{n-1}\beta_j^2=H_{22} \geq 0
\eeq
we have that
\beq \nonumber
\limsup_{n \rightarrow \infty}\frac{1}{n}\sum_{j=0}^{n-1}|\alpha_j \beta_j| \leq \limsup_{n \rightarrow \infty} \sqrt{\sum_{j=0}^{n-1}\frac{\alpha_j^2}{n}} \sqrt{ \sum_{j=0}^{n-1}\frac{\beta_j^2}{n}} \leq \sqrt{H_{11}H_{22}}
\eeq
so that
\beq \no
\limsup_{n \rightarrow \infty}\frac{1}{n} \sum_{j=0}^{n-1}\left \|A_j \right \| <\infty
\eeq
which is \eqref{eq:AverageBound}.

To show \eqref{eq:LimitBound} it suffices to show that
\beq \label{eq:LimitBound1}
\frac{\left \|A_n \right \|}{n} \rightarrow 0
\eeq
since for any $j \leq n$, $\frac{ \left \|A_j \right \|}{n} \leq \frac{ \left \| A_j \right \|}{j}$, so if \eqref{eq:LimitBound1} holds and there exist sequences $j_\ell \leq n_\ell$ with
$\frac{\left \| A_{j_\ell} \right \|}{n_\ell}> \varepsilon>0$, then $j_\ell$ would have to be bounded which is absurd.
But \eqref{eq:LimitBound1} is clear since
\beq \nonumber
\left \|\frac{1}{n+1}\sum_{j=0}^{n-1}A_j-H \right \| \leq \frac{n}{n+1}\left \|\frac{1}{n}\sum_{j=0}^{n-1} A_j -H  \right\|+\frac{1}{n+1}\| H\|\rightarrow 0
\eeq
as $n \rightarrow \infty$, which implies
\beq \no
\frac{1}{n+1}\|A_n \|=\left \|\frac{1}{n+1}\sum_{j=0}^n A_j-\frac{1}{n+1}\sum_{j=0}^{n-1}A_j \right \| \rightarrow 0.
\eeq

We next prove \eqref{eq:MatrixConvergence}. Note that for any $n$ and $\frac{j}{n} \leq t < \frac{j+1}{n}$
\beq \no
\widetilde{A}^{(n)}(t)=A_j
\eeq
so showing \eqref{eq:MatrixConvergence} is equivalent to showing that
\beq \label{eq:MatrixConvergenceDiscrete}
\lim_{n \rightarrow \infty} \sup_{\ell < n}  \frac{1}{n}\left \| \sum_{j=0}^\ell \left(A_j-H \right) \right \|=0.
\eeq
To show this, fix $\varepsilon>0$ and let $M$ be such that for any $\ell \geq M$,
\beq \no
\frac{1}{\ell}\left \| \sum_{j=0}^\ell \left(A_j-H \right) \right \|<\varepsilon.
\eeq
Now, by \eqref{eq:LimitBound} there exists $N \geq M$ so that for any $n \geq N$
\beq \no
\frac{1}{n} \sum_{j=0}^M \left \| A_j-H \right \| < \varepsilon
\eeq
and thus we see that for any $n \geq N$, for $\ell \leq M$
\beq \label{eq:MatrixConvergenceD1}
\frac{1}{n} \left \| \sum_{j=0}^\ell \left(A_j-H \right) \right \| \leq \frac{1}{n} \sum_{j=0}^M \left \| A_j-H \right \| < \varepsilon
\eeq
and for $M < \ell \leq n$
\beq \label{eq:MatrixConvergenceD2}
\frac{1}{n} \left \| \sum_{j=0}^\ell \left(A_j-H \right) \right \| \leq \frac{1}{\ell} \left \| \sum_{j=0}^\ell \left(A_j-H \right) \right \| <\varepsilon.
\eeq
\eqref{eq:MatrixConvergenceD1} and \eqref{eq:MatrixConvergenceD2} imply \eqref{eq:MatrixConvergenceDiscrete}.

Finally, to show \eqref{eq:AverageDecay} note first that this is equivalent to showing that for any fixed $L$
\beq \label{eq:AverageDecayDiscrete}
\lim_{n \rightarrow \infty}\frac{1}{n}\sum_{j=\left[nk/L \right]}^{\left[n(k+1)/L \right]} \| A_j \| \leq \frac{C}{L}
\eeq
for some $C>0$, uniformly in $k<L$ (write $k=[tL]$). To show this, note first that
\beq \no
\begin{split}
& \frac{1}{n} \left \| \sum_{j=\left[nk/L \right]}^{\left[n(k+1)/L \right]} \left(A_j-H \right) \right \| \\
& \quad \leq
\frac{1}{n} \left \| \sum_{j=0}^{\left[n(k+1)/L \right]} \left(A_j-H \right) \right \| +\frac{1}{n} \left \| \sum_{j=0}^{\left[nk/L \right]} \left(A_j-H \right) \right \| \rightarrow 0
\end{split}
\eeq
as $n \rightarrow \infty$, uniformly in $k$ by \eqref{eq:MatrixConvergenceDiscrete}. Now, clearly (and also uniformly in $k<L$)
\beq \no
\lim_{n \rightarrow \infty}\frac{1}{n} \sum_{j=\left[nk/L \right]}^{\left[n(k+1)/L \right]} H =\frac{H}{L}
\eeq
so we see that
\beq \no
\lim_{n \rightarrow \infty}\frac{1}{n} \sum_{j=\left[nk/L \right]}^{\left[n(k+1)/L \right]} \alpha_j^2=\frac{H_{11}}{L}
\eeq
and similarly
\beq \no
\lim_{n \rightarrow \infty}\frac{1}{n} \sum_{j=\left[nk/L \right]}^{\left[n(k+1)/L \right]} \beta_j^2=\frac{H_{22}}{L},
\eeq
and both limits are uniform in $k$. Since the off diagonal terms of $A_j$ are dominated by the diagonal terms, we see, using Cauchy-Schwarz (as in the first step of the proof), that
\eqref{eq:AverageDecayDiscrete} holds with $C$ that depends on $\| H \|$.

\end{proof}

\section{Proof of Theorems \ref{thm:Compactness} and \ref{thm:Sine}}

\begin{proof}[Proof of Theorem \ref{thm:Compactness}]
Fix $a\in\mathbb{C}$. We apply Lemma \ref{lem:CompactnessCondition} with
\beq \no
X_j^{(n)}=\mathcal{J}Q_j(a/n)
\eeq
and
\beq \no
A_j^{(n)}=aH_{x_0,j}=a\left(\begin{array}{cc} p_j(x_0)^2   & -p_j(x_0)q_j(x_0)  \\ -p_j(x_0)q_j(x_0)  & q_j(x_0)^2 \end{array} \right).
\eeq
Note that
\beq \no
\mathcal{J}Q_{[tn]}(a/n)=\widetilde{X}^{(n)}(t).
\eeq

By \eqref{eq:BoundedPolys}, we see that \eqref{eq:BoundednessCondition} holds and so there exists a sequence $n_k \rightarrow \infty$ and a bounded measurable function $H_a$ such that
\beq \no
\lim_{k\rightarrow\infty}\sup_{0\leq t\leq 1} \left \| \int_0^t \left(aH_{x_0,[n_ks]}-H_a(s) \right)\textrm{d}s \right \|=0
\eeq
and
\beq \label{eq:ConvergenceUniform1}
Q_{[tn_k]}(a/n_k) \underset{k \rightarrow \infty}{\longrightarrow} \widetilde{Q}_a(t)
\eeq
uniformly in $t\in [0,1]$, where
\beq \no
\mathcal{J}\widetilde{Q}_a'(t)=a H(t)\widetilde{Q}_a(t) \quad \widetilde{Q}_a(0)=Id.
\eeq
Moreover, it is clear by linearity that the sequence $\{n_k\}$ is independent of $a$ and that $H_a(t)=aH_1(t)\equiv a H(t)$. Also $H(t) \geq 0$ for almost every $t\in[0,1]$, since $H_{x_0,j} \geq 0$ for every $j$. Finally, the convergence is uniform for $a$ in compact sets in $\mathbb{C}$.

Now, by \eqref{eq:ScaledCDKernel}, we see that for any $a \neq b$
\beq \no
\begin{split}
&\lim_{k \rightarrow \infty}\frac{1}{n_k}K_{n_k}\left(x_0+\frac{a}{n_k}, x_0+\frac{b}{n_k} \right) \\
& \quad =\frac{1}{a-b} \det \left( \widetilde{Q}_a(1) \left(\begin{array}{c} 1 \\ 0 \end{array} \right), \widetilde{Q}_b(1) \left( \begin{array}{c} 1 \\ 0 \end{array} \right)\right).
\end{split}
\eeq
By \eqref{eq:ReproducingDeterminant}
\beq \no
\frac{1}{a-b} \det \left( \widetilde{Q}_a(1) \left(\begin{array}{c} 1 \\ 0 \end{array} \right), \widetilde{Q}_b(1) \left( \begin{array}{c} 1 \\ 0 \end{array} \right)\right)=K_H(\overline{a},b)
\eeq
where $K_H(z,\zeta)$ is the reproducing kernel of the de Branges space associated with the canonical system defined by $H$.

To deal with the case $a=b$ fix $a \in \mathbb{C}$ and note that for any $k$,
\beq \no
\frac{1}{n_k}K_{n_k}\left(x_0+\frac{a}{n_k}, x_0+\frac{b}{n_k} \right)
\eeq
is entire as a function of $b$. In addition, $K_H(a,b)$ is an entire function of $b$ (as the reproducing kernel of a de Branges space), Thus, by the uniform convergence in \eqref{eq:ConvergenceUniform1}, the convergence is uniform on annuli around $a$, which implies (using Cauchy's formula) that
\beq \no
\begin{split}
&\frac{1}{n_k}K_{n_k}\left(x_0+\frac{a}{n_k}, x_0+\frac{a}{n_k} \right) \\
&\quad =\frac{1}{2\pi i}\int_{|b-a|=1}\frac{1}{n_k}K_{n_k}\left(x_0+\frac{a}{n_k}, x_0+\frac{b}{n_k} \right)\frac{\textrm{d}b}{b-a}
\end{split}
\eeq
converges to
\beq \no
\frac{1}{2\pi i}\int_{|b-a|=1}K_H(\overline{a},b)\frac{\textrm{d}b}{b-a}=K_H(\overline{a},a).
\eeq
This finishes the proof.
\end{proof}

\begin{proof}[Proof of Theorem \ref{thm:Sine}]
Assume first that (i) holds. Note that, in particular, this means that
\beq \no
\frac{1}{n}\sum_{j=0}^{n-1}p_j\left(x_0 \right)^2 \underset{n \rightarrow \infty}{\longrightarrow} \frac{\rho \left(x_0 \right)}{w\left( x_0\right)}.
\eeq
Furthermore, by \cite[Theorem 1.3]{bls}, (i) implies that
\beq \no
\frac{1}{n}\sum_{j=0}^{n-1}q_j\left(x_0 \right)^2 \underset{n \rightarrow \infty}{\longrightarrow}\frac{\rho \left(x_0 \right)}{\widetilde{w}\left( x_0\right)},
\eeq
and, in addition, by the proof of \cite[Theorem 1.5]{bls}, we also have that
\beq \no
\frac{1}{n}\sum_{j=0}^{n-1}p_j\left(x_0\right)q_j \left(x_0 \right) \underset{n \rightarrow \infty}{\longrightarrow} \textrm{Re}F\left(x+i0 \right)\frac{\rho \left(x_0 \right)}{w\left( x_0\right)}.
\eeq
Thus, for $H_{x_0,j}$ as defined in \eqref{eq:H}, we have that
\beq \no
\lim_{n \rightarrow \infty}\frac{1}{n}\sum_{j=0}^{n-1} H_{x_0,j}=H
\eeq
where
\beq \label{eq:HIs}
H\equiv \left( \begin{array}{cc} \frac{\rho(x_0)}{w(x_0)} & -\textrm{Re}F(x_0+i0)\frac{\rho(x_0)}{w(x_0)} \\
-\textrm{Re}F(x_0+i0)\frac{\rho(x_0)}{w(x_0)} & \frac{\rho(x_0)}{\widetilde{w}(x_0)} \end{array} \right).
\eeq
By Lemma \ref{lem:NiceConvergence} (and by linearity in $a$) this implies that uniformly for $a$ in compacts and $t \in [0,1]$
$$
\lim_{n\rightarrow \infty} Q_{a/n}([tn])=\widetilde{Q}_a(t)
$$
where $\widetilde{Q}_a(t)$ solves
\beq \label{eq:QSolves}
\left( \begin{array}{cc} 0 & -1 \\
1 & 0 \end{array} \right) \widetilde{Q}'(t)=aH\widetilde{Q}(t)  \quad \widetilde{Q}(0)=Id.
\eeq

\smallskip

Now assume that (ii) holds. By \eqref{eq:ScaledCDKernel}, this implies that for $a \neq b$
\beq \no
\lim_{n \rightarrow \infty}\frac{K_n\left(x_0+\frac{a}{n},x_0+\frac{b}{n} \right)}{n}=\frac{1}{a-b} \det \left( \widetilde{Q}_a(1) \left(\begin{array}{c} 1 \\ 0 \end{array} \right), \widetilde{Q}_b(1) \left( \begin{array}{c} 1 \\ 0 \end{array} \right)\right)
\eeq
where $\widetilde{Q}$ solves \eqref{eq:QSolves} and $H$ is \eqref{eq:HIs}. By
\beq \no
\widetilde{w}\left(x_0 \right)=\frac{w\left(x_0\right)}{\left|F\left(x_0+i0\right) \right|^2}=\frac{w\left(x_0\right)}{\pi^2 w\left(x_0 \right)^2+\textrm{Re}F\left(x_0+i0\right)^2}
\eeq
(see, e.g., \cite[(2.13)]{bls}) we see that $\det H=\pi^2 \rho \left(x_0 \right)^2>0$. By integrating \eqref{eq:QSolves} (this is an ODE with constant coefficients) we get that
\beq \no
\begin{split}
\frac{1}{a-b} \det \left( \widetilde{Q}_a(1) \left(\begin{array}{c} 1 \\ 0 \end{array} \right), \widetilde{Q}_b(1) \left( \begin{array}{c} 1 \\ 0 \end{array} \right)\right)&=\frac{\sin\left(\sqrt{\det H}(b-a) \right)}{\frac{w\left(x_0 \right)}{\rho\left(x_0 \right)}\sqrt{\det H}(b-a)}\\
&=\frac{\sin\left(\pi \rho \left(x_0 \right)(b-a) \right)}{\pi w\left(x_0 \right)(b-a)}.
\end{split}
\eeq
Since all functions involved are entire and the convergence is uniform on compact sets, we may use Cauchy's formula as in the proof of Theorem \ref{thm:Compactness} to show convergence for $a=b$. This finishes the proof.
\end{proof}

\section{An example}

In this section we sketch the details of an example illustrating the power of the approach demonstrated here, to deal with perturbations of order $O(1/n)$. Let $J_n$ be the Jacobi matrix with parameters defined by
\beq \no
a_{n,j}\equiv 1 \quad b_{n,j}=\frac{(-1)^{j+1} V}{n}
\eeq
for some $V>0$. For  $x_0=0$ we want to compute the scaling limit of the difference equation (i.e.\ the corresponding canonical system) and the scaling limit of the CD kernel:
\beq \label{eq:LimitCD}
\lim_{n \rightarrow \infty}\frac{K^{(V)}_n\left(a/n,b/n \right)}{n}.
\eeq
Note that since the Jacobi coefficients themselves depend on $n$, this is somewhat different than the theorems considered in the previous sections. Nevertheless, we believe this example can be instructive.

We first want to compute the transfer matrices $T_\ell^{(n)}(0)$. We do that by applying a similar idea to the one applied in Section 2. Namely, we let $T_\ell^{(0)}(0)$ be the transfer matrices at $0$ for $V=0$ and consider the recurrence equation for
\beq \label{eq:Qhat}
\widehat{Q}_\ell^{(n)}(0)=\left(T_\ell^{(0)}(0)\right)^{-1}T_\ell^{(n)}(0).
\eeq
Note that for all $\ell$
\beq \no
S_\ell^{(0)}(0)=\left(\begin{array}{cc} 0 & -1 \\ 1 & 0 \end{array} \right)
\eeq
so we get the following formulas for $T_\ell^{(0)}(0)$: 
\beq \no
T_\ell^{(0)}(0)=\left(\begin{array}{cc} 0 &-1 \\ 1 & 0 \end{array} \right)
\eeq
for $\ell=1 \mod 4$,
\beq \no
T_\ell^{(0)}(0)=\left(\begin{array}{cc} -1 & 0 \\ 0 & -1 \end{array} \right)
\eeq
for $\ell =2 \mod 4$,
\beq \no
T_\ell^{(0)}(0)=\left(\begin{array}{cc} 0 &1 \\ -1 & 0 \end{array} \right)
\eeq
for $\ell =3 \mod 4$, and 
\beq \no
T_\ell^{(0)}(0)=\left(\begin{array}{cc} 1 & 0 \\ 0 & 1 \end{array} \right)
\eeq
for $\ell =0 \mod 4$.

It follows that for $(\ell+1)$ even
\beq \no
\widehat{Q}_{\ell+1}^{(n)}(0)=\left(\left( \begin{array}{cc} 1 & V/n \\ 0 & 1 \end{array} \right)\left( \begin{array}{cc} 1 & 0 \\ V/n & 1 \end{array} \right) \right)^{(\ell+1)/2}
\eeq
and for $(\ell+1)$ odd
\beq \no
\widehat{Q}_{\ell+1}^{(n)}(0)=\left(\begin{array}{cc} 1 & 0 \\ V/n & 1 \end{array} \right)\left(\left( \begin{array}{cc} 1 & V/n \\ 0 & 1 \end{array} \right)\left( \begin{array}{cc} 1 & 0 \\ V/n & 1 \end{array} \right) \right)^{\ell/2}.
\eeq

Denoting
\beq \no
\lambda_\pm=1+\frac{V^2}{2n^2}\pm\frac{V}{n}\sqrt{1+\frac{V^2}{4n^2}}.
\eeq
and
\beq \no
U_n=\left(\begin{array}{cc} 1 & 1 \\ -\sqrt{1+\frac{V^2}{4n^2}}-\frac{V}{2n} & \sqrt{1+\frac{V^2}{4n^2}}-\frac{V}{2n} \end{array} \right),
\eeq
we get that
\beq \no
\left(\left( \begin{array}{cc} 1 & V/n \\ 0 & 1 \end{array} \right)\left( \begin{array}{cc} 1 & 0 \\ V/n & 1 \end{array} \right) \right)
=U_n\left(\begin{array}{cc} \lambda_- &0 \\ 0 & \lambda_+ \end{array} \right)U_n^{-1}.
\eeq
It follows that for $(\ell+1)$ even
\beq \label{eq:evenDiagonal}
\widehat{Q}_{\ell+1}^{(n)}(0)=U_n\left(\begin{array}{cc} \lambda_-^{\frac{\ell+1}{2}} &0 \\ 0 & \lambda_+^{\frac{\ell+1}{2}} \end{array} \right)U_n^{-1}
\eeq
and for $(\ell+1)$ odd
\beq \label{eq:oddDiagonal}
\widehat{Q}_{\ell+1}^{(n)}(0)=\left(\begin{array}{cc} 1 & 0 \\ V/n & 1 \end{array} \right)U_n\left(\begin{array}{cc} \lambda_-^{\frac{\ell}{2}} &0 \\ 0 & \lambda_+^{\frac{\ell}{2}} \end{array} \right)U_n^{-1},
\eeq
and we may combine these formulas with \eqref{eq:Qhat} and the formulas for $T_\ell^{(0)}(0)$ to obtain an expression for $T_\ell^{(n)}(0)$ (which we omit here).

We now want to consider the difference equation for
\beq \no
Q_\ell^{(n)}(a/n)=\left(T_\ell^{(n)}(0) \right)^{-1} T_\ell^{(n)}(a/n).
\eeq
Because of the $U_n^{-1}$ in \eqref{eq:evenDiagonal} and \eqref{eq:oddDiagonal}, it is in fact simpler to first deal with the difference equation for
\beq \no
\mathcal{Q}_\ell^{(n)}(a/n)=U_n^{-1}Q_\ell^{(n)}(a/n).
\eeq
Combining \eqref{eq:Qhat} with \eqref{eq:evenDiagonal}, \eqref{eq:oddDiagonal} and with the second equality in \eqref{eq:VariationOfParameters} we get that
\beq \label{eq:ExampleDifference}
\mathcal{Q}_{\ell+1}^{(n)}(a/n)-\mathcal{Q}_{\ell}^{(n)}(a/n)=-\frac{a}{n}A_\ell^{(n)}\mathcal{Q}_\ell^{(n)}(a/n)
\eeq
where, for $\ell$ even
\beq \label{eq:Aelleven}
A_\ell^{(n)}=\frac{1}{2\sqrt{1+\frac{V^2}{4n^2}}}\left(\begin{array}{cc} -1 & -\lambda_+^\ell \\ \lambda_-^\ell &1 \end{array} \right)
\eeq
and for $\ell$ odd
\beq \label{eq:Aelleven}
A_\ell^{(n)}=\frac{1}{2\sqrt{1+\frac{V^2}{4n^2}}}\left(\begin{array}{cc} 1 & -\lambda_+^{\ell-2} \\ \lambda_-^{\ell-2} & -1 \end{array} \right)+O\left( \frac{V}{n}\right),
\eeq
(the $O\left(\frac{V}{n} \right)$ is an $\ell,n$-dependent matrix whose norm is bounded, uniformly in $\ell$, by $\frac{V}{n}$).

It thus follows that
\beq \no
\sup_{\ell \leq n,n}\left \|A_\ell^{(n)} \right \|<\infty
\eeq
so that, in particular, \eqref{eq:AverageBound}, \eqref{eq:LimitBound} and \eqref{eq:AverageDecay} are satisfied. Moreover, it is not hard to see that, for $\widetilde{A}^{(n)}(t)$ defined from $A_\ell^{(n)}$ as in Section 3, (note the alternating signs on the diagonals and the leading behavior $\lambda_{\pm} \sim 1\pm \frac{V}{n}$)
\beq \no
\sup_{0 \leq t \leq 1} \left \|\int_{0}^t \left(\widetilde{A}^{\left(n \right)}(s)-A(s) \right)\textrm{d}s \right \| \rightarrow 0
\eeq
where
\beq \no
A(s)=\left(\begin{array}{cc} 0 & -\frac{e^{sV}}{2} \\ \frac{e^{-sV}}{2} & 0 \end{array} \right).
\eeq
It therefore follows from Lemma \ref{lem:ContinuumConvergence} that $\widetilde{\mathcal{Q}}^{(n)}$ converges, uniformly in $t \in [0,1]$, to the solution of
\beq \label{eq:DifferentialSolvable}
\mathcal{Q}'(t)=-a A(t) \mathcal{Q}(t)
\eeq
with
\beq \no
\mathcal{Q}(0)=\lim_{n \rightarrow \infty} U_n^{-1}=\left( \begin{array}{cc} \frac{1}{2} & -\frac{1}{2} \\ \frac{1}{2} & \frac{1}{2} \end{array} \right)=U_\infty^{-1},
\eeq
(Lemma \eqref{lem:ContinuumConvergence} is phrased for fixed boundary conditions, but the proof is easily modifiable to converging boundary conditions).

By conjugating with $U_\infty=\left( \begin{array}{cc} 1 & 1 \\ -1 & 1 \end{array} \right)$ and $U_\infty^{-1}$, and by multiplying with $\mathcal{J}$, it now follows that $Q_\ell^{(n)}(a/n)$ converges in the sense described in Section 3 to a solution of the canonical system
\beq \label{eq:CanonicalSinh}
\mathcal{J}Q_a'(t)=a \left(\begin{array}{cc} \frac{\cosh(tV)}{2} & \frac{\sinh(tV)}{2} \\ \frac{\sinh(tV)}{2} & \frac{\cosh(tV)}{2} \end{array} \right)Q_a(t).
\eeq
Note that for $V=0$ we indeed get the limit for the Jacobi matrix with $a_n \equiv 1, \ b_n \equiv 0$ at $x_0=0$.

Finally, in order to compute \eqref{eq:LimitCD} we note that \eqref{eq:DifferentialSolvable} can be transformed into a constant coefficient second order ODE (by differentiating twice any of the entries). By integrating the resulting equation and performing the necessary transformations in order to compute
\beq \no
\det \left(Q_a(1) \left(\begin{array}{c} 1 \\ 0 \end{array} \right), Q_b(1) \left(\begin{array}{c} 1 \\ 0 \end{array} \right) \right),
\eeq
we get that for all $a,b \in \mathbb{C}$ (uniformly on compacts)
\beq \label{eq:ModifiedSine}
\begin{split}
&\lim_{n \rightarrow \infty} \frac{K^{(V)}_n\left(\frac{a}{n},\frac{b}{n} \right)}{n} \\
&\quad= a\frac{\sinh\left(\omega_a/2 \right)\cosh\left(\omega_b/2 \right)}{\omega_a}-b\frac{\sinh\left(\omega_b/2 \right)\cosh\left(\omega_a/2 \right)}{\omega_b}\\
&\quad+V\left(\frac{\sinh\left(\omega_b/2 \right)\cosh\left(\omega_a/2 \right)}{\omega_b}- \frac{\sinh\left(\omega_a/2 \right)\cosh\left(\omega_b/2 \right)}{\omega_a}\right)
\end{split}
\eeq
where $\omega_x=\sqrt{V^2-x^2}$.
\begin{remark}
Note that the function on the right hand side of \eqref{eq:ModifiedSine} is entire. This can be seen by writing the power series of the hyperbolic functions and noting that $\omega_a$ and $\omega_b$ appear throughout only with even powers. 
\end{remark}
\begin{remark}
It is an interesting exercise to verify that indeed, when $V=0$ one gets
\beq \no
\frac{\sin\left((a-b)/2 \right)}{a-b}
\eeq
as expected.
\end{remark}
\section{Appendix: Canonical systems, Jacobi matrices, and the CD kernel}

Canonical systems generalize many second order difference and differential operators, with Schr\"odinger, Dirac and Jacobi being particular cases. Furthermore, the correspondence between such systems and Hermite-Biehler functions (see below for a definition), established by de Branges \cite{debranges} is a central result in the theory of de Branges spaces. Thus there is a huge literature on canonical systems, spanning spectral theory, harmonic analysis and number theory (\cite{arov-dym,debranges,dym,Hassi,Lagarias,Remling,Romanov} are only a few relevant references). The next several paragraphs contain only a quick a review of some results that are directly relevant here (for details, see e.g., \cite{Remling}).

A canonical system is a family (indexed by $z \in \mathbb{C}$) of differential equations of the form
\beq \label{eq:CanonicalSystem}
\mathcal{J}u'(t)=z H(t)u(t)
\eeq
on some interval $I=[0,L] \subseteq \mathbb{R}$, where $\mathcal{J}=\left( \begin{array}{cc} 0 & -1 \\
1 & 0 \end{array} \right)$ and $H(t)$ is a $2\times2$ nonnegative definite matrix valued function such that the entries of $H$ are integrable functions on $I$. By a change of variable (see, e.g., \cite[Section 6]{Remling} or \cite{Romanov}) we may also assume that $H(t) \not \equiv 0$ on nonempty open sets.

In the case that $H(t)$ is invertible almost everywhere, we may rewrite \eqref{eq:CanonicalSystem} as
\beq \no
H^{-1}(t)\mathcal{J} u'(t)=z u(t)
\eeq
i.e., as an eigenvalue equation for the operator $H^{-1}(t)\mathcal{J}\frac{d}{dt}$ which is symmetric with respect to the inner product
\beq \label{eq:IPCanonical}
\left( f,g \right)_H=\int_0^L \left(f(t),H(t) g(t)\right)_{\mathbb{C}^2}\textrm{d}t.
\eeq
Let $\mathcal{H}_H$ be the Hilbert space of vector valued functions on $I$ corresponding to this inner product. Choosing appropriate boundary conditions at $0$ and $L$ (e.g., $f(0)=f(L)=\left(\begin{array}{c} 1 \\ 0 \end{array} \right)$) defines a domain of self-adjointness for this operator. Moreover, even if $H$ is not invertible a.e., as long as the boundary condition at $0$ is not orthogonal to $\textrm{Image}(H)(t)$ $\forall t\in(0,\varepsilon)$ for some $\varepsilon>0$ (and a similar condition is satisfied at $L$), it is possible to define a subspace of $\mathcal{H}_H$ such that \eqref{eq:CanonicalSystem} is the eigenvalue equation for a self-adjoint operator defined on that subspace (see \cite[Section 2]{Romanov}).

A solution to \eqref{eq:CanonicalSystem} is an absolutely continuous function, $u: [0,L]\rightarrow \mathbb{C}^2$, that satisfies \eqref{eq:CanonicalSystem} a.e.

Let $u(t,z)$ be a solution with initial value $u(0)=\left(\begin{array}{c} 1 \\ 0 \end{array} \right)$. Then the function $E_L(z)=u_1(L,z)+iu_2(L,z)$ is a \emph{Hermite-Biehler} function, i.e.\ it has no zeros in the upper half plane $\mathbb{C}^+=\left\{z \mid \textrm{Im}z>0 \right\}$ and satisfies
\beq \no
\left|E_L(z) \right| \geq \left|E_L(\overline{z}) \right| \textrm{ for } z \in \mathbb{C}^+.
\eeq
Such functions are at the basis of the theory of de Branges spaces.

The de Branges space $B(E)$associated with a Hermite-Biehler function $E$ is the set of all entire functions, $f$, such that both $\frac{f}{E}$ and $\frac{f^\sharp}{E}$ are in $H^2(\mathbb{C}^++)$, where $f^\sharp(z)=\overline{f(\overline{z})}$. It is a reproducing kernel Hilbert space with inner product given by
\beq \no
\left(f,g \right)_E=\frac{1}{\pi}\int_\mathbb{R} \overline{f(x)}g(x)\frac{\textrm{d}x}{|E(x)|^2}.
\eeq
and reproducing kernel
\beq \no
K_E(z,\zeta)=\frac{\overline{E(z)}E(\zeta)-E(\overline{z})\overline{E(\overline{\zeta})}}{2i(\overline{z}-\zeta)}.
\eeq

It follows that every canonical system \eqref{eq:CanonicalSystem} gives rise to a de Branges space through the function $E_L(z)$ associated with its solution as described above. In fact, in this particular case, it is not hard to show that the reproducing kernel is also given by
\beq \label{eq:ReproducingDeterminant}
K_{E_L}(z,\zeta)=\frac{1}{\overline{z}-\zeta}\det\left(\begin{array}{cc} u_1(L,\overline{z}) & u_1(L,\zeta) \\ u_2(L,\overline{z}) & u_2(L,\zeta) \end{array} \right),
\eeq
and by
\beq \label{eq:ReproducingDeBranges}
K_{E_L}(z,\zeta)=\left( u(\cdot,z),u(\cdot,\zeta) \right)_H=\int_0^L \left(u(t,z),H(t) u(t,\zeta)\right)_{\mathbb{C}^2}\textrm{d}t.
\eeq
In particular, for any fixed $z$, $K_{E_L}(z,\zeta)$ is an entire function of $\zeta$.

It turns out that this is not a particular example, but rather the general case: a fundamental result in the theory of de Branges spaces (see, e.g.\ \cite{debranges,Remling,Romanov}) says that every de Branges space is associated with a canonical system.

As shown in \cite[Section 4]{LubJFA}, the CD kernel $K_n(x,y)$ associated with a measure $\mu$ is a reproducing kernel for a de Branges space as well. Indeed, let
\beq \no
L_n(x,y)=(x-y)K_n(x,y).
\eeq
Then the Christoffel-Darboux formula \eqref{eq:CDFormula} says that
\beq \label{eq:CDFormula1}
L_n(x,y)=a_n\left(p_n(x)p_{n-1}(y)-p_n(y)p_{n-1}(x) \right).
\eeq

Now for any fixed $w \in \mathbb{C}^+$, the function
\beq \no
E_{n,w}(\cdot)=\sqrt{2}\frac{L_n(\overline{w},\cdot)}{\left|L_n(w,\overline{w}) \right|^{1/2}}
\eeq
is a Hermite-Biehler function. The corresponding de Branges space, $B(E_{n,w})$, is the space of polynomials of degree$<n$ and its reproducing kernel is
\beq \no
K_{E_{n,w}}(z,\zeta)=K_n(\overline{z},\zeta)
\eeq
(as can be seen easily from \eqref{eq:CDFormula}). Note that the definition in \cite[Theorem 4.3]{LubJFA} differs from ours by a factor of $\sqrt{\pi}$; this is because we have an extra factor of $\pi$ in the inner product defining $B(E)$. As shown in Section 2 above, this de Branges space is naturally associated with the discrete canonical system \eqref{eq:DiscreteCanonical} with $x_0=0$.

It is an interesting fact that it is possible to also go in the other direction and associate a Jacobi matrix with any discrete canonical system satisfying the appropriate conditions. Let $\{r_\ell\}_{\ell=0}^\infty$ and $\{s_\ell\}_{\ell=0}^\infty$ be two real sequences satisfying
\beq \no
s_\ell r_{\ell-1}-r_{\ell}s_{\ell-1}=\frac{1}{a_\ell}
\eeq
for some sequence of positive numbers $\{a_\ell\}_{\ell=0}^\infty$ with $a_0=1$. Consider the discrete canonical system
\beq \no
\begin{split}
&\mathcal{J}\left(\widehat{u}_{\ell+1}(x)-\widehat{u}_\ell(x)\right) \\
&\quad=x\left(\begin{array}{cc} r_\ell^2 & -s_\ell r_\ell \\ -s_\ell r_\ell & s_\ell^2 \end{array} \right) \widehat{u}_\ell(x).
\end{split}
\eeq
Then if $\{u_\ell(x)\}$ is a solution satisfying with $u_0=\left(\begin{array}{c} 1 \\ 0 \end{array} \right)$ then
\beq \no
p_\ell(x)=r_\ell u_{\ell,1}(x)-s_\ell u_{\ell,2}(x)
\eeq
is the $\ell$'th orthonormal polynomial with respect to the Jacobi matrix whose off diagonal parameter sequence is the given sequence $\{a_\ell\}_{\ell=1}^\infty$, and whose diagonal entries are
\beq \no
b_\ell=a_\ell a_{\ell-1}\left(r_\ell s_{\ell-2}-s_\ell r_{\ell-2} \right).
\eeq
That this is true follows by a direct computation writing
\beq \no
\widehat{T}_{\ell}(0)=\left( \begin{array}{cc} a_\ell r_\ell & -a_\ell s_\ell \\ r_{\ell-1} & -s_{\ell-1} \end{array} \right)
\eeq
and noting that $\widehat{T}_\ell(0)\widehat{T}_{\ell-1}^{-1}(0)=S_\ell(0)$ from which one may compute the values of the Jacobi parameters. For a similar analysis in the continuum, associating canonical systems to one-dimensional Schr\"odinger operators, see \cite[Section 8]{Remling}.

\end{document}